\newtheorem{theorem}{Theorem}
\newtheorem{corollary}{Corollary}
\newtheorem{observation}{Observation}
\newtheorem{definition}{Definition}
\newtheorem{lemma}{Lemma}
\newtheorem{example}{Example}
\newcommand{\elec}{\mathcal{E}} 
\newcommand{\bigo}[1]{O(#1)}
\newcommand{\bigos}[1]{O^*(#1)}
\newcommand{\yes}{\mbox{Yes}}
\newcommand{\no}{\mbox{No}}
\newcommand{\yesins}{{\yes}-instance}
\newcommand{\noins}{{\no}-instance}
\newcommand{\poly}{\textsf{P}}
\newcommand{\bpoly}{{\textbf{\textsf{P}}}}
\newcommand{\np}{\textsf{\mbox{NP}}}
\newcommand{\nph}{{\np}-{hard}}
\newcommand{\conphns}{co{\np}-{hardness}}
\newcommand{\conphshort}{co{\np}-{h}}
\newcommand{\nphns}{\np-hardness}
\newcommand{\npc}{{\np}-{complete}}
\newcommand{\nphshort}{{\np}-{h}}
\newcommand{\bnphshort}{{\textbf{{\np}-{h}}}}
\newcommand{\wa}{\textsf{\mbox{W[1]}}}
\newcommand{\wb}{\textsf{W[2]}}
\newcommand{\wbh}{{\textsf{W[2]}}-hard}
\newcommand{\wbhshort}{{{\textsf{W[2]}}-h}}
\newcommand{\bwbhshort}{{\textbf{\textsf{W[2]}-h}}}
\newcommand{\wbc}{\wb-complete}
\newcommand{\wbhns}{{\textsf{W[2]}}-hardness}
\newcommand{\fpt}{\textsf{\mbox{FPT}}}
\newcommand{\prob}[1]{{\textsc{#1}}}
\newcommand{\setmid}{:}
\newcommand{\abs}[1]{|#1|}
\newcommand{\edge}[2]{\{#1, #2\}}
\newcommand{\Succ}{\,\ }
\newcommand{\wahns}{\textsf{W[1]}-hardness}
\newcommand{\wahshort}{\textsf{W[1]}-h}
\newcommand{\onlyconf}[1]{}
\newcommand{\remove}[1]{}
\newcommand{\memph}[1]{\emph{#1}}
\newcommand{\inneighbor}[2]{\Gamma_{#1}^-(#2)}
\newcommand{\outneighbor}[2]{\Gamma_{#1}^+(#2)}
\newcommand{\neighbor}[2]{\Gamma_{#1}(#2)}
\newcommand{\smallo}[1]{o(#1)}
\newcommand{\bs}{B}
\newcommand{\rs}{R}
\newcommand{\arc}[2]{(#1, #2)}
\newcommand{\ac}{\text{AC}}
\newcommand{\av}{\text{AV}}
\newcommand{\dv}{\text{DV}}
\newcommand{\dc}{\text{DC}}
\newcommand{\dtime}{\textsf{DTIME}}
\newcommand{\EP}[3]{
\begin{center}
\renewcommand{\tabcolsep}{0.5mm}
{
\begin{tabular}{|lp{0.88\columnwidth}|}\hline
\multicolumn{2}{|l|}{\textsc{#1}} \\ \hline
{\bf Given:}    & #2  \\
{\bf Question:} & #3  \\ \hline
\end{tabular}
}
\end{center}
}
\begin{document}
\title{On the Complexity of the Two-Stage Majoritarian Rule\thanks{A preliminary version of the paper appears in the Proceedings of the 22nd International Conference on Autonomous Agents and Multiagent Systems (AAMAS~2023)~\protect\cite{DBLP:conf/atal/000123a}.}}
\author{Yongjie Yang}
\date{\small{Saarland University\\
yyongjiecs@gmail.com}
}

\maketitle

\begin{abstract}
Sequential voting rules have played a crucial role in shaping decisions within parliamentary and legislative frameworks. 
After observing that the existing sequential rules fail several fundamental axioms, Horan and Sprumont [2022] proposed a sequential rule named two-stage majoritarian rule (TSMR). This paper examines this rule by investigating the complexity of {\sc{Agenda Control}}, {\sc{Coalition Manipulation}}, {\sc{Possible Winner}}, {\sc{Necessary Winner}}, and eight standard election control problems. Our study offers a comprehensive insight into the complexity landscape of these problems.
%
\end{abstract}



\section{Introduction}
Exploring the complexity of strategic voting problems has been a vibrant topic in computational social choice (see, e.g.,~\cite{DBLP:conf/ijcai/BoehmerBFN21,DBLP:journals/ai/ElkindGORV21,DBLP:conf/ijcai/FitzsimmonsH22,DBLP:journals/jcss/HemaspaandraHR22,DBLP:journals/ai/NevelingR21}). This pursuit stems from the recognition that malicious strategic voting can potentially undermine the integrity of election results, and it is widely acknowledged that complexity could serve as a barrier against strategic actions~\cite{Bartholdi92howhard,BARTHOLDI89}. In particular, to what extent a voting rule resists strategic voting is commonly recognized as a pivotal criterion for evaluating its applicability. 
Over the past three decades, the complexity of many different strategic voting problems under numerous voting rules has been established~\cite{BaumeisterR2016,handbookofcomsoc2016Cha7FR}. 
Against this backdrop, whenever a new voting rule with desirable axiomatic properties is proposed, comparing it with existing rules in terms of resistance to strategic voting is of great importance. 

This paper aims to complete the complexity landscape of several strategic voting problems under a sequential voting rule recently proposed by Horan and Sprumont~\cite{HoranSprumont2022}. 
Given the preferences of a set of voters on a candidate set and an agenda (a linear ordering specifying the order in which candidates are considered), a sequential rule outputs a single winner. 
Sequential rules are exceedingly useful in parliamentary and legislative decision-making. 
The amendment rule and the successive rule have long been among the most widely used sequential rules in many countries~\cite{Rasch2000}. 
Under the amendment rule, the procedure unfolds over multiple rounds, each producing a temporary winner. The first candidate in the agenda wins the first round. In round~$i$, the $i$-th candidate is compared head-to-head with the winner of round $i-1$, and the candidate preferred by a majority becomes the new temporary winner. The winner of the final round is the amendment winner. 
In contrast, under the successive procedure, the winner is the first candidate in the agenda who is preferred by a majority of voters over all subsequent candidates.

Despite their practical relevance, both the amendment rule and the successive rule suffer from several theoretical limitations. Recently, Horan and Sprumont~\cite{HoranSprumont2022} proposed a new sequential voting rule called the \emph{two-stage majoritarian rule} (TSMR), and provided a comprehensive axiomatic characterization. 
Given the voters' preferences, the majority graph is a directed graph (digraph) whose vertices correspond to candidates, with an arc from one candidate to another whenever a majority of voters prefer the former to the latter. 
The TSMR operates in two stages. In the first stage, it identifies the sources (i.e., candidates with no incoming arcs) of the acyclic spanning subgraph formed by the forward arcs of the majority graph with respect to the agenda. These are precisely the candidates that do not lose a head-to-head comparison to any of their predecessors. This stage can be viewed as a shortlisting phase. 
In the second stage, the rule selects the last candidate among all shortlisted sources, according to the agenda, as the winner. Therefore, the winner does not lose any head-to-head comparison within the shortlist. Notably, when the number of voters is odd---a common and practically relevant assumption---the winner is preferred by a majority of voters in every pairwise comparison within the shortlist.

Although TSMR satisfies several desirable axiomatic properties, it remains unclear how it performs under strategic behavior, which is a critical consideration in the design and evaluation of voting rules. Since the amendment rule and the successive rule are widely used and have been extensively studied with regard to their vulnerability or resistance to manipulation, the introduction of TSMR by Horan and Sprumont~\cite{HoranSprumont2022} naturally raises the question of how the new rule compares to these benchmarks in terms of resistance to strategic actions.  
This paper endeavors to provide insights into this inquiry. In addition, to achieve a more comprehensive understanding of TSMR, we study two winner determination problems under scenarios where only partial information about voters' preferences is available.

Our main contributions are as follows.

\begin{enumerate}
    \item[(1)] We study the {\prob{Agenda Control}} problem, where an external agent, empowered to set the agenda, attempts to ensure that a distinguished candidate wins. More precisely, given an election and a preferred candidate, the agent must determine how to arrange the agenda so that this candidate emerges as the winner.

\item[(2)] We study the {\prob{Coalition Manipulation}} problem. Unlike {\prob{Agenda Control}}, this problem assumes a fixed agenda. A subset of voters, called manipulators, seeks to make a distinguished candidate win by strategically coordinating their votes. That is, the task is to determine how the manipulators should (mis)report their preferences to secure the victory of the preferred candidate.

    \item[(3)] We study eight standard election control problems, namely, {\prob{CCAV}}, {\prob{CCDV}}, {\prob{CCAC}}, {\prob{CCDC}}, {\prob{DCAV}}, {\prob{DCDV}}, {\prob{DCAC}}, and {\prob{DCDC}}. In these abbreviations, ``CC''/``DC'' stands for ``constructive control''/``destructive control'', the third letter ``A''/``D'' stands for ``adding''/``deleting'', and the last letter ``V''/``C'' stands for ``votes''/``candidates''. These problems model a scenario where an external agent aims to make a distinguished candidate win (constructive) or not win (destructive) by adding or deleting a limited number of votes or candidates.

    \item[(4)] We study the {\prob{Possible Winner}} and the {\prob{Necessary Winner}} problems, which arise when only partial information on voter preferences or agenda is available. {\prob{Possible Winner}} asks which candidates can win at least one completion of the partial input, whereas {\prob{Necessary Winner}} asks which candidates win in every completion regardless of the missing information.

    \item[(5)] For the above problems, we provide an exhaustive examination of their (parameterized) complexity. 
    For the eight election control problems, as well as the {\prob{Possible Winner}} and the {\prob{Necessary Winner}} problems, we study both the special case where the distinguished candidate~$p$ is the first in the agenda and the case where~$p$ is the last. We also discuss two parameterizations that capture the degree of incompleteness in the input for the two winner determination problems under partial information. 
    For a concise overview of our findings and comparisons with previous results for the amendment rule and the successive rule, see Table~\ref{tab-results}. 

    \item[(6)] As byproducts, our reductions also provide numerous lower bounds for kernelization algorithms, approximation algorithms, and exact algorithms for the aforementioned problems.
\end{enumerate}

 \begin{table}[ht!]
 \caption{The complexity of various voting problems under different sequential rules, with our findings highlighted in bold. The terms ``first'', ``last'', and ``$\overline{\text{last}}$'' denote the position of the distinguished candidate in the agenda, representing first, last, and not last, respectively. {\poly}-results hold regardless of the distinguished candidate's position. Additionally,~$m$ is the number of candidates,~$n$ is the number of votes,~$n_{\text{rg}}$ is the number of registered votes, and~$k$ is the number of votes/candidates allowed to be added or deleted. The hardness results for the {\prob{Possible Winner}} and {\prob{Necessary Winner}} problems hold under further restrictions, such as when the distinguished candidate appears in specific positions and each partial vote contains only a constant number of undetermined pairs. See Section~\ref{sec-possible-necessary} for further details.}
    \label{tab-results}
    \begin{center}
    \renewcommand{\tabcolsep}{0.8mm}
    \footnotesize{
    \begin{tabular}{llllll}\toprule
    \multicolumn{2}{l}{}
    & CCAV
    & CCDV
    & CCAC
    & CCDC\\ \midrule

    {TSMR}
    & first
    & {\bwbhshort} ($k+{n_{\text{rg}}}$, Thm.~\ref{thm-ccav-hard})
    & {\bwbhshort} ($k$, $n-k$, Thms.~\ref{thm-ccdv-hard-deleted-first},~\ref{thm-ccdv-hard-first})
    & {{\bwbhshort} ($k$, Thm.~\ref{thm-ccac-hard})}
    & {{\bpoly} (Thm.~\ref{thm-ccdc-p})}  \\ \cline{2-6}

    & last
    & {\bwbhshort} ($k+{n_{\text{rg}}}$, Thm.~\ref{thm-ccav-hard-last})
    & {\bwbhshort} ($k$, $n-k$, Thms.~\ref{thm-ccdv-hard},~\ref{thm-ccdv-wbh-not-deleted-last})
    & {immune (Cor.~\ref{cor-ccac-last-immne})}
    & {{\bpoly} (Thm.~\ref{thm-ccdc-p})}\\ \midrule

  {amendment {\cite{DBLP:journals/tcs/LiuFZL09,DBLP:journals/algorithmica/Yang25}}}
    &first
    & {\wahshort} ($k+{n_{\text{rg}}}$)
    & {\wahshort} ($k$)
    & {\poly}
    & {\poly}
    \\ \cline{2-6}

    & last
    & {\wbhshort} ($k+{n_{\text{rg}}}$)
    & {\wbhshort} ($k$)
    & {\poly}
    & {\poly}\\ \midrule

    {successive {\cite{DBLP:journals/tcs/LiuFZL09,DBLP:journals/algorithmica/Yang25}}}
    & first
    & {\poly}
    & {\poly}
    & {immune}
    & {\wahshort} ($k$, $m-k$) \\ \cline{2-6}

    &last
    & {\wahshort} ($k+n_{\text{rg}}$)
    & {\wbhshort} ($k$)
    & {{\wbhshort} ($k$)}
    & {\poly} 
    \\  \bottomrule
    \end{tabular}
    }
\medskip

    \renewcommand{\tabcolsep}{1.08mm}
    \footnotesize{
    \begin{tabular}{llllll}\toprule
     \multicolumn{2}{l}{}
     &DCAV
     &DCDV
     &DCAC
     &DCDC\\ \midrule

    {\multirow{2}{*}{TSMR}}
    & $\overline{\text{last}}$
    & {\bwbhshort} ($k+{n_{\text{rg}}}$, Thm.~\ref{thm-dcav-hard-first})
    & {\bwbhshort} ($k$, $n-k$, Thms.~\ref{thm-dcdv-hard-deleted-first},~\ref{thm-dcdv-wbh-not-deleted-first})
    & {{\bpoly} (Thm.~\ref{thm-dcac-p})}
    & {{\bpoly} (Cor.~\ref{cor-dcdc-p})}
    \\ \cline{2-6}

    & last
    & {\poly} \cite{Bartholdi92howhard}
    & {\poly} \cite{Bartholdi92howhard}
    & {{\bpoly} (Thm.~\ref{thm-dcac-p})}
    & immune (trivial) \\ \midrule

{amendment} {\cite{DBLP:journals/tcs/LiuFZL09,DBLP:journals/algorithmica/Yang25}}
    & first
    & {\poly}
    & {\poly}
    & {\poly}
    & immune
    \\ \cline{2-6}

    & last
    & {\wahshort} ($k$)
    & {\wbhshort} ($k$)
    & {\poly}
    & {\poly} \\ \midrule

    {successive} {\cite{DBLP:journals/tcs/LiuFZL09,DBLP:journals/algorithmica/Yang25}}
    & first
    & {\poly}
    & {\poly}
    & {\wbhshort} ($k$)
    & immune
    \\ \cline{2-6}

    & last
    & {\poly}
    & {\poly}
    & {\poly}
    & {\wahshort} ($k$, $m-k$) \\  \bottomrule
    \end{tabular}
}
\medskip

    \renewcommand{\tabcolsep}{1.45mm}
    \footnotesize{
    \begin{tabular}{lllll}\toprule
     & {\prob{Agenda Control}}
     & {\prob{Coalition Manipulation}}
     & {\prob{Possible Winner}}
     & {\prob{Necessary Winner}}\\ \midrule

     {TSMR}
     & {\bpoly} (Thm.~\ref{thm-agenda-control-p-general})
     & {\bpoly} (Thm.~\ref{thm-manipulation-p})
     & {\bnphshort} (Thms.~\ref{thm-possible-hard},~\ref{thm-possible-hard-second-last})
     & {\bpoly} (Thm.~\ref{thm-necessary-p}) \\ \midrule

  amendment
     & {\poly}
     & {\poly}
     & {\nphshort}
     & {\conphshort} \\ \midrule

     successive {\cite{DBLP:journals/jair/BredereckCNW17}}
     &{\poly}
     & {\poly}
     & {\nphshort}
     & {\poly} \\  \bottomrule
    \end{tabular}
    }
        \end{center}
\end{table}

\subsection{Related Works}
{\prob{Agenda Control}} is one of the most extensively studied problems in the context of sequential voting (see, e.g.,~\cite{Black1958,Miller1977}). Despite its significance, the complexity of {\prob{Agenda Control}} remained uncharted territory until recently~\cite{DBLP:journals/jair/BredereckCNW17}. We note that the complexity of analogous problems within the domain of knockout tournaments was studied earlier~\cite{DBLP:journals/ai/AzizGMMSW18,BARTHOLDI89,Bartholdi92howhard,DBLP:conf/ijcai/ConitzerLX09,DBLP:journals/dam/ManurangsiS23,DBLP:conf/aaai/Williams10,williamstournamentinhandbook2016}. 

{\prob{Coalition Manipulation}}, initially scrutinized by Conitzer, Sandholm, and Lang~\cite{DBLP:journals/jacm/ConitzerSL07}, emerges as a natural generalization of the renowned {\prob{Manipulation}} problem~\cite{BARTHOLDI89}. We refer to~\cite{BaumeisterR2016,Conitzer2016,DBLP:journals/aarc/Veselova16,DBLP:journals/amai/Walsh11,DBLP:journals/jair/Walsh11} for detailed results on the complexity of this problem for many traditional rules (e.g., Borda, Maximin, etc., which do not necessitate an agenda for determining the winner).

The constructive election control problems were first studied by Bartholdi, Tovey, and Trick~\cite{Bartholdi92howhard}, while their destructive counterparts were introduced by Hemaspaandra, Hemaspaandra, and Rothe~\cite{DBLP:journals/ai/HemaspaandraHR07}. Since then, the computational complexity of these problems has been extensively investigated for a wide range of voting rules, including the amendment rule and the successive rule~\cite{DBLP:journals/algorithmica/Yang25}. For a comprehensive overview of progress up to 2016, see~\cite{BaumeisterR2016,handbookofcomsoc2016Cha7FR}, and for recent results, see~\cite{DBLP:journals/jair/CarletonCHNTW24,DBLP:conf/atal/CarletonCHNTW23a,DBLP:journals/aamas/ErdelyiNRRYZ21,DBLP:journals/ai/NevelingR21,DBLP:conf/atal/Yang17,AAMAS17YangBordaSinlgePeaked,DBLP:conf/ecai/000120}.

The complexity of {\prob{Possible Winner}} and {\prob{Necessary Winner}} for the amendment rule and the successive rule has been established by Bredereck~et~al.~\cite{DBLP:journals/jair/BredereckCNW17}. For traditional voting rules, these problems were first studied by Konczak and Lang~\cite{Konczak05votingprocedures}, and their complexity for many additional rules has been subsequently analyzed~\cite{DBLP:journals/tdasci/ChakrabortyDKKR21,DBLP:conf/atal/ChakrabortyK21,DBLP:journals/jair/XiaC11}.

\subsection{Organization}
The remainder of the paper is organized as follows. Section~\ref{sec-pre} presents the formal definitions of key concepts utilized throughout the paper. In Section~\ref{sec-strategic}, we unfold our concrete findings regarding strategic problems, including {\prob{Agenda Control}}, {\prob{Coalition Manipulation}}, and the eight standard election control problems. Following this, Section~\ref{sec-possible-necessary} delves into the {\prob{Possible Winner}} and the  {\prob{Necessary Winner}} problems. In Section~\ref{sec-lowerbounds}, we elaborate on how our reductions provide significant insights into other algorithmic lower bounds. Finally, Section~\ref{sec-con} offers a summary of our results and outlines potential avenues for future research.

Proofs of certain results are deferred to the appendix to preserve the flow of the main narrative. Such statements are marked with a~$\bigstar$.

\section{Preliminaries}
\label{sec-pre}
We assume the reader is familiar with the basics of graph theory, complexity theory, and parameterized complexity theory~\cite{DBLP:books/sp/BG2018,DBLP:books/sp/CyganFKLMPPS15,DBLP:conf/lata/Downey12,DBLP:journals/interfaces/Tovey02}. 

For an integer~$i$, let~$[i]$ denote the set of all positive integers no greater than~$i$. 
For a binary relation~$\rs$, we often use~$x\, \rs\, y$ to denote $(x, y)\in \rs$.

\subsection{Graphs}

An undirected graph is a tuple $G=(N, A)$, where~$N$ is the set of vertices and~$A$ is the set of edges. An edge between vertices $v$ and $v'$ is denoted by $\edge{v}{v'}$. The set of neighbors of~$v$ in~$G$ is $\neighbor{G}{v}=\{v'\in N \setmid \edge{v}{v'}\in A\}$. A vertex~$v$ dominates a vertex~$v'$ if $v=v'$ or $\edge{v}{v'}\in A$.  For $S, S' \subseteq N$, $S$ dominates $S'$ if every vertex in $S'$ is dominated by at least one vertex in $S$. 

A digraph is a tuple $G=(N, A)$, where~$N$ denotes the set of vertices, and~$A$ represents the set of arcs. 
An arc from a vertex~$a$ to a vertex~$b$ is denoted as~$\arc{a}{b}$. The set of inneighbors of a vertex~$a$ in~$G$ is~$\inneighbor{G}{a}=\{b\in N \setmid \arc{b}{a}\in A\}$, while the set of outneighbors of~$a$ in~$G$ is~$\outneighbor{G}{a}=\{b\in N \setmid \arc{a}{b}\in A\}$. For~$S\subseteq N$, let $\outneighbor{G}{S}=\bigcup_{a\in S}\outneighbor{G}{a} \setminus S$. In contexts where the specific graph~$G$ is evident, we may drop~$G$ from the notations. 
An oriented graph is a digraph in which there is at most one arc between any pair of vertices. 

For a graph $G$ (directed or undirected) and a subset $S \subseteq N$, the subgraph of $G$ induced by~$S$, denoted $G[S]$, is the graph with vertex set $S$ consisting of all edges (or arcs) in $G$ that have both endpoints in $S$.

\subsection{Elections}
An election is a tuple $(C, V)$ of a set of candidates~$C$ and a multiset of votes~$V$. Each vote $\succ\, \in V$ is defined as a linear order on~$C$. For two candidates $c, c'\in C$, we denote that~$c$ is ranked before~$c'$ in a vote~$\succ$ if $c\succ c'$. Furthermore,~$c$ is ranked immediately before~$c'$ in~$\succ$ if $c\succ c'$ and no candidate is ranked between them. A vote $\succ$ expresses the voter’s preference:~$a$ is preferred to~$b$ if $a \succ b$.  
Preferences can also be written as sequences; for example, $a \Succ b \Succ c$ indicates that~$a$ is ranked before~$b$, and~$b$ before~$c$. 

For an election $(C, V)$ and a subset $C' \subseteq C$ of candidates, let $(C', V_{C'})$ denote the election restricted to~$C'$, obtained from $(C, V)$ by removing candidates in $C \setminus C'$ from both the candidate set and all votes in~$V$. When the context is clear, we simply write $(C', V)$ instead of $(C', V_{C'})$.

An agenda~$\rhd$ on~$C$ is a linear order on~$C$. 
If $c \rhd c'$, we say $c$ precedes $c'$ (or $c$ is a predecessor of $c'$), and $c'$ succeeds $c$ (or $c'$ is a successor of $c$).

A sequential rule~$\tau$ maps an election $(C, V)$ and an agenda~$\rhd$ to a single winner $\tau(C, V, \rhd)\in C$.

For $c, c'\in C$, we use~$n_V(c, c')$ to denote the number of votes in~$V$ ranking~$c$ before~$c'$. We assert that~$c$ beats (respectively, ties with)~$c'$ with respect to~$V$ if $n_V(c, c')>n_V(c', c)$ (respectively,  $n_V(c, c')=n_V(c', c)$). A candidate~$c\in C$ is a weak Condorcet winner of~$(C, V)$  if no other candidate beats~$c$ with respect to $V$, and a Condorcet winner if $c$ beats all other candidates.

\subsection{Majority Graphs}
The majority graph of an election $\elec = (C, V)$, denoted by~$G_E$, is an oriented graph with vertex set~$C$. There is an arc from $c \in C$ to $c' \in C$ in~$G_E$ if and only if $n_V(c, c') > n_V(c', c)$. This structure is not only central to the definition of the TSMR but also serves as the starting point for several of our polynomial-time algorithms. To facilitate our discussion, we first provide the time complexity for computing the majority graph.

Constructing the majority graph for an election with $m$ candidates and $n$ votes can be done straightforwardly in time $\bigo{n\cdot m^2}$. 
Sornat, Vassilevska Williams, and Xu~\cite{DBLP:conf/sigecom/SornatWX21} identified a strong connection between the computation of the majority graph and the well-studied {\prob{Dominance Product}} problem, initially introduced by Matou\v{s}ek~\cite{DBLP:journals/ipl/Matousek91}. Exploiting this connection, they developed a faster algorithm, summarized in the following lemma.\footnote{Notably, the algorithm indeed applies to the computation of weighted majority graphs.}

\begin{lemma}[\cite{DBLP:conf/sigecom/SornatWX21,DBLP:journals/ipl/Matousek91}]
\label{lemma-majority-graph}
Given an election with $m$ candidates and $n$ votes, its majority graph can be computed in time~$\bigo{n \cdot m^{1.69} + n^{0.55} \cdot m^2}$.
\end{lemma}

The running time bound in Lemma~\ref{lemma-majority-graph} depends on the current upper bound on the exponent of square matrix multiplication, known to be at most~$2.373$. Any future improvement in matrix multiplication algorithms would immediately lead to an updated bound in Lemma~\ref{lemma-majority-graph}. We refer the reader to~\cite{DBLP:journals/theoretics/AlmanW24} for further details.

\subsection{Sequential Voting Rules}
We now formally define the TSMR, the amendment rule, and the successive rule. Although the latter two are not the primary focus of this paper, they are closely intertwined with our discussions, and we therefore provide their formal definitions as well. 

Let $E=(C, V)$ be an election and~$\rhd$ an agenda on~$C$.

\begin{itemize}
\item {\bf{Two-stage majoritarian rule (TSMR)}}\footnote{Horan and Sprumont \cite{HoranSprumont2022} defined a more general TSMR that accepts as input the majority graph of an election $(C, V)$ along with a subset $C'\subseteq C$, and returns a winner from~$C'$. Our formulation aligns with the scenario where $C'=C$.} 
Let~$G_E^{\rhd}$ be the spanning subdigraph of the majority graph~$G_E$ of~$E$ with only forward arcs with respect to~$\rhd$. In other words,~$G_E^{\rhd}$ has the vertex set~$C$, and there exists an arc from~$c$ to~$c'$ in~$G_E^{\rhd}$ if and only if $c\rhd c'$ and there is an arc from~$c$ to~$c'$ in~$G_E$. Let $C'\subseteq C$ be the set of sources of $G_E^{\rhd}$, i.e., candidates without inneighbors. Then, the winner is the rightmost candidate in~$C'$, i.e., the one $c\in C'$ such that $c'\rhd c$ for all $c'\in C'\setminus \{c\}$.

\item {\bf{Amendment}} This procedure unfolds across~$\abs{C}$ rounds, each round determining a temporary winner. Precisely, the winner of the first round is the first candidate in the agenda. The winner of round~$i$ where $i\geq 2$ is determined as follows. Let~$c$ be the winner of round~$i-1$, and let~$c'$ be the $i$-th candidate in the agenda. The winner of round~$i$ is~$c$ if~$c$ beats~$c'$; otherwise, it is~$c'$. The amendment winner is the winner of the last round.

\item {\bf{Successive}}
For a candidate $c\in C$ and a subset $C'\subseteq C\setminus \{c\}$, we say that~$c$ majority-dominates~$C'$ in~$E$ if a majority of votes in~$V$ rank~$c$ before all candidates in~$C'$. The successive winner is the first candidate in the agenda who majority-dominates the set of all her successors.
\end{itemize}

The amendment rule and the successive rule have also been studied under other names (cf.~\cite{Black1958,Farquharson1969}). 
Note that all three sequential rules are resolute (decisive)---they always return a single winner. This resoluteness arises from the fact that the agenda is part of the input, introducing an asymmetry in the ordering of candidates. While this guarantees a unique outcome, it comes at the cost of sacrificing neutrality.

\begin{example} Let $C=\{a, b, c, d\}$, and let~$V$ be a set of three votes with the preferences \underline{$b\Succ d\Succ c\Succ a$}, \underline{$c\Succ a\Succ b\Succ d$}, and \underline{$a\Succ d\Succ b\Succ c$}, respectively. 
The majority graph of~$(C, V)$, three different agendas, and the winners under different rules and agendas are shown in Figure~\ref{fig-seqrule}. 
\end{example}
\begin{figure}[ht!]
\centering{
\includegraphics[width=0.98\textwidth]{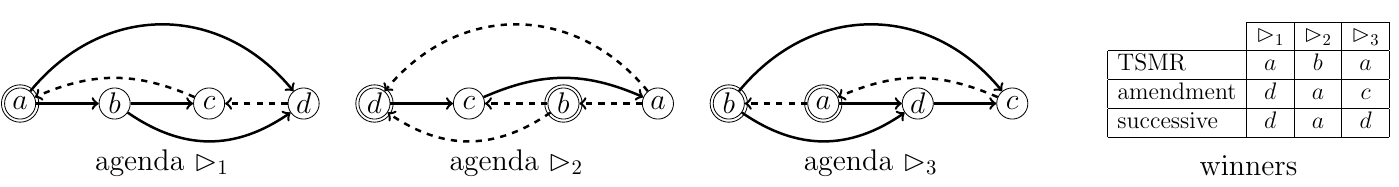}
}
\caption{An illustration of TSMR, the amendment rule, and the successive rule. Dashed arcs represent backward arcs with respect to~$\rhd_i$ (i.e., arcs not included in~$G_E^{\rhd_i}$). Double-circled vertices represent the sources of the spanning subdigraph of the majority graph that contains only forward arcs.}
\label{fig-seqrule}
\end{figure}

The definitions of the sequential rules readily reveal a connection between the first and last candidates in an agenda and the (weak) Condorcet winner, as outlined below.

\begin{observation}
\label{obs-a}
For an election~$(C, V)$ and an agenda~$\rhd$ on~$C$, the following hold.
\begin{enumerate}
    \item[\emph{(1)}] The first candidate in~$\rhd$ is the amendment winner of~$(C, V)$ if and only if it is the Condorcet winner of~$(C, V)$.
    \item[\emph{(2)}] The last candidate in~$\rhd$ is the TSMR winner of~$(C, V)$ if and only if it is a weak Condorcet winner of~$(C, V)$.
    \item[\emph{(3)}] If the first candidate in~$\rhd$ is the successive winner of~$(C, V)$, then it is also the Condorcet winner of~$(C, V)$.
    \item[\emph{(4)}] If the first candidate in~$\rhd$ is the Condorcet winner of~$(C, V)$, then it is also the TSMR winner of~$(C, V)$.
    \item[\emph{(5)}] If the last candidate in~$\rhd$ is a weak Condorcet winner of~$(C, V)$, then it is also the amendment winner and the successive winner of~$(C, V)$.
    \item[\emph{(6)}] The converses of (3)--(5) do not necessarily hold.
\end{enumerate}
\end{observation}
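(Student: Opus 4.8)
The plan is to verify the six items directly from the definitions in Section~\ref{sec-pre}, grouping them by the rule involved; throughout, write the agenda as $c_1\rhd c_2\rhd\cdots\rhd c_m$ with $m=\abs{C}$.

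\emph{Amendment, items (1) and the amendment half of (5).} The engine is the elementary fact that the winner of round~$i$ of the amendment procedure is always either the winner of round~$i-1$ or the $i$-th candidate~$c_i$, and that a tie in a round breaks in favour of the later candidate; hence the first candidate~$c_1$ keeps the temporary winnership only as long as it has beaten every candidate seen so far, and once it loses a round it never returns. For~(1): if~$c_1$ is the Condorcet winner, an induction shows it beats~$c_i$ in round~$i$ for every $i\ge 2$, so it wins round~$m$; conversely, if~$c_1$ is not the Condorcet winner, take the least $j\ge 2$ with~$c_1$ not beating~$c_j$, note that~$c_1$ wins rounds $1,\dots,j-1$ but loses round~$j$, and invoke the irreversibility just mentioned. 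For the amendment half of~(5): if the last candidate~$c_m$ is a weak Condorcet winner, then whoever wins round~$m-1$ fails to beat~$c_m$, so round~$m$ returns~$c_m$.

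\emph{Successive, items (3) and the successive half of (5).} Here the point is to unpack ``$c$ beats~$C'$'' when~$C'$ contains the last candidate. If~$c_1$ is the successive winner it beats $C\setminus\{c_1\}$, i.e.\ a strict majority of votes rank~$c_1$ first; each such vote ranks~$c_1$ before any fixed~$c'$, so $n_V(c_1,c')>\abs{V}/2\ge n_V(c',c_1)$, giving~(3). For the successive half of~(5): if~$c_m$ is a weak Condorcet winner but some earlier~$c_i$ beat its successor set $\{c_{i+1},\dots,c_m\}$, then a strict majority of votes would rank~$c_i$ before~$c_m$, so~$c_i$ would beat~$c_m$, a contradiction; since~$c_m$ vacuously beats its empty successor set, it is the first candidate to beat all of its successors, hence the successive winner.

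\emph{TSMR, items (2) and (4).} I would argue inside the forward-arc digraph~$G_E^{\rhd}$. Since~$c_m$ has no successors it has no outgoing arc in~$G_E^{\rhd}$, and an arc enters~$c_m$ in~$G_E^{\rhd}$ exactly when some candidate beats~$c_m$ in the majority graph; hence~$c_m$ belongs to the source set~$C'$ of~$G_E^{\rhd}$ iff it is a weak Condorcet winner, and being right-most in all of~$\rhd$ it is then automatically the right-most element of~$C'$, which settles both directions of~(2). For~(4): if~$c_1$ is the Condorcet winner, then~$G_E^{\rhd}$ contains an arc from~$c_1$ to every other candidate, so $C'=\{c_1\}$ and~$c_1$ is the TSMR winner. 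For~(6) I would display a few three-candidate elections: a profile whose Condorcet winner is not ranked first by a strict majority refutes the converse of~(3); an instance with a cyclic majority graph, with the agenda ordered so that the first candidate is the unique source of~$G_E^{\rhd}$, refutes the converse of~(4) (there is no Condorcet winner at all); and an instance with a cyclic majority graph whose last candidate is both the successive and the amendment winner yet is beaten by some candidate refutes the converse of~(5). The parts needing the most care are phrasing the irreversibility step in~(1) and the ``right-most in~$C'$'' step in~(2)/(4) cleanly, and choosing the small instances in~(6) so that each converse genuinely fails; everything else is a routine unfolding of the definitions.
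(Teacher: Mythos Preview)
Your proposal is correct. The paper itself does not supply a proof of Observation~\ref{obs-a}; it simply precedes the statement with the remark that ``by the definitions of the sequential rules, it is easy to verify'' the listed facts, and leaves the verification to the reader. Your plan carries out precisely that verification---unfolding the definitions of the amendment, successive, and TSMR procedures and checking each item directly---so it is exactly the approach the paper has in mind, just made explicit. The arguments you sketch for items~(1)--(5) are sound, and the three-candidate counterexamples you describe for item~(6) do exist (for instance, a standard Condorcet cycle on three candidates, with the agenda chosen appropriately, handles the converses of~(4) and~(5) simultaneously).
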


\subsection{Problem Formulations}
For a sequential voting rule~$\tau$, we study the following problems defined in~\cite{DBLP:journals/jair/BredereckCNW17}.

\EP{Agenda Control-$\tau$}
{An election~$(C, V)$ and a distinguished candidate $p\in C$.}
{Is there an agenda~$\rhd$ on~$C$ such that~$p$ is the winner of $(C, V, \rhd)$ with respect to~$\tau$, i.e., $p=\tau(C, V, \rhd)$?}

\EP{Coalition Manipulation-$\tau$}
{An election $(C, V)$, a distinguished candidate $p\in C$, an agenda~$\rhd$ on~$C$, and a positive integer~$k$.}
{Is there a multiset~$V'$ of~$k$ votes on~$C$ such that $p=\tau(C, V\cup V', \rhd)$?}

For a partial order~${\rs}$ on a set~$X$, a linear extension of ${\rs}$ is a linear order on~$X$ containing~${\rs}$, i.e., a linear order~${\rs}'$ such that $(x,y)\in {\rs}$ implies $(x, y)\in {\rs}'$ for all $x, y\in X$.

A partial election is a tuple $(C, V)$, where~$V$ is a multiset of partial orders on~$C$. A completion of $(C, V)$ is an election $(C, V')$ where elements of~$V'$ one-to-one correspond to elements of~$V$ such that every $v'\in V'$ is a linear extension of its corresponding partial order in~$V$. A partial agenda on~$C$ is a partial order on~$C$.

\EP{Possible Winner-$\tau$}
{A partial election $(C, V)$, a distinguished candidate $p\in C$, and a partial agenda~$\rhd$ on~$C$.}
{Is there a completion $(C, V')$ of $(C, V)$ and a linear extension~$\rhd'$ of~$\rhd$ such that $p=\tau(C, V', \rhd')$?}

\EP{Necessary Winner-$\tau$}
{A partial election $(C, V)$, a distinguished candidate $p\in C$, and a partial agenda~$\rhd$ on~$C$.}
{Does $p=\tau(C, V', \rhd')$ hold for all completions~$(C, V')$ of~$(C, V)$ and all linear extensions~$\rhd'$ of~$\rhd$?}

We also study eight standard control problems which are special cases of the following problems.

\EP{Constructive Multimode Control-$\tau$}
{An election $(C \cup D,V \cup W)$ with a set~$C$ of registered
  candidates,
  a set~$D$ of unregistered candidates, a multiset~$V$ of
  registered votes, a multiset~$W$ of unregistered votes, a
  distinguished candidate $p \in C$, an agenda~$\rhd$ on $C\cup D$, and four
  integers~$k_{\av}$,~$k_{\dv}$,~$k_{\ac}$, and~$k_{\dc}$ with~$k_{\av}\leq |W|$, $k_{\dv}\leq |V|$, $k_{\ac} \leq |D|$, and $k_{\dc} \leq |C|$.
  }
{Are there $V' \subseteq V$, $W' \subseteq W$, $C' \subseteq
  C\setminus \{p\}$, and $D' \subseteq D$ such that $\abs{V'} \leq k_{\dv}$, $\abs{W'}\leq k_{\av}$, $\abs{C'}\leq k_{\dc}$,
  $\abs{D'}\leq k_{\ac}$, and~$p$ wins
  $((C \setminus C')\cup D', (V \setminus V') \cup W', \rhd')$ with respect to~$\tau$, where $\rhd'$ is~$\rhd$ restricted to $(C \setminus C')\cup D'$?}

{\sc{Destructive Multimode Control-$\tau$}} has the same input as {\sc{Constructive Multimode Control-$\tau$}}, and asks whether there
exist~$V'$,~$W'$,~$C'$, and~$D'$ (as defined above) such that~$p$ is not the~$\tau$ winner of $((C \setminus C')\cup D',(V \setminus V') \cup W', \rhd')$.

\begin{table}[ht!]
\caption{Special cases of \prob{Constructive/Destructive Multimode Control}. Here,~X represents either CC for constructive control or DC for destructive control.}
\begin{center}
\begin{tabular}{l  l} \toprule
problems & restrictions \\ \midrule

XAV-$\tau$ & $k_{\dv}=k_{\ac}=k_{\dc}=0$, $D=\emptyset$ \\

XAC-$\tau$ & $k_{\av}=k_{\dv}=k_{\dc}=0$, $W=\emptyset$\\

XDV-$\tau$ & $k_{\av}=k_{\ac}=k_{\dc}=0$, $D=W=\emptyset$\\ 

XDC-$\tau$ & $k_{\av}=k_{\dv}=k_{\ac}=0$, $D=W=\emptyset$\\ \bottomrule
\end{tabular}
\end{center}
\label{tab-def-restriction}
\end{table}

Table~\ref{tab-def-restriction} summarizes the eight standard control problems as special cases of the above two problems. For simplicity, in our examination of these problems, we use~$k$ to denote the integer component of the input that is not mandated to be $0$, omitting those that are requested to be~$0$ or~$\emptyset$. As an illustration, an instance of {\prob{CCAV-$\tau$}} is represented as $((C, V\cup W), p, \rhd, k)$, where~$k$ denotes~$k_{\text{AV}}$. We may occasionally refrain from explicitly stating that candidates (or votes) from~$C$ (or~$V$) are registered, and those from~$D$ (or~$W$) are unregistered. For readability, the suffix ``$\tau$'' may be dropped when no ambiguity arises.

Our hardness results are established through reductions from the following problem.

\EP{Red-Blue Dominating Set (\prob{RBDS})}
{A bipartite graph~$G$ with bipartition $(\rs, \bs)$, where the vertices in~$\rs$ and~$\bs$ are called red and blue vertices, respectively, and a positive integer~$\kappa \leq \abs{\bs}$.}
{Is there a subset $\bs' \subseteq \bs$ of cardinality~$\kappa$ that dominates~$\rs$ in~$G$?}

{\prob{RBDS}} is {\npc}~\cite{garey}, and is {\wbc} with respect to~$\kappa$~\cite{fellows2001}.

\subsection{Other Useful Notations}

Unless stated otherwise, for a set~$S$,~$\overrightarrow{S}$ denotes an arbitrary fixed linear ordering on~$S$, and~$\overleftarrow{S}$ denotes its reverse. For $S'\subseteq S$, $\overrightarrow{S}[S']$ denotes the restriction of~$\overrightarrow{S}$ to~$S'$, and $\overrightarrow{S}\setminus S'$ denotes~$\overrightarrow{S}$ with elements of~$S'$ removed.

\subsection{Remarks}
Previous studies on sequential voting rules often assume that the number of voters is odd, allowing majority graphs to be treated as tournaments and simplifying the analysis (see, e.g.,~\cite{DBLP:journals/jair/BredereckCNW17,HoranSprumont2022}). This assumption is also practically relevant, as many real-world parliamentary decision-making processes are intentionally designed to involve an odd number of voters to avoid ties. 
To ensure generality, our results are presented without relying on this assumption, and they remain valid when the condition is imposed. In particular, our polynomial-time algorithms apply directly. For the hardness results, several of our reductions can be adapted with minor modifications, while others already apply to this setting. We note that for the voter control problems, this assumption implies that no ties arise after adding or deleting votes. Our reductions for candidate control problems are robust in this regard, as they ensure that the assumption holds in the constructed elections.

All our reductions are polynomial-time, and all computationally hard problems proved in the paper are in {\np}. Consequently, any problem shown to be {\wbh} in the paper is also {\npc}. For brevity, NP-completeness is not stated explicitly in the corresponding theorems. 

\section{Strategic Problems}
\label{sec-strategic}
In this section, we study the complexity of several strategic voting problems for TSMR. We begin with a result on the winner determination problem, since this procedure serves as a key subroutine in several results discussed later. 

\begin{lemma}
\label{lem-TSMR-winner-determination}
Given an election $E = (C, V)$ with $m$ candidates and $n$ votes, and an agenda~$\rhd$ on~$C$, the TSMR winner of~$E$ with respect to~$\rhd$ can be computed in time~$\bigo{n \cdot m^{1.69} + n^{0.55} \cdot m^2}$.
\end{lemma}

\begin{proof}
By Lemma~\ref{lemma-majority-graph}, the majority graph of~$E$ can be computed in time~$\bigo{n \cdot m^{1.69} + n^{0.55} \cdot m^2}$. 
To construct the spanning subdigraph~$G_E^{\rhd}$---consisting of the forward arcs of the majority graph according to the agenda~$\rhd$---we examine at most~${m\cdot (m-1)}/{2}$ arcs, yielding a construction time of~$\bigo{m^2}$. 
Finally, identifying the last source of~$G_E^{\rhd}$ with respect to~$\rhd$ takes~$\bigo{m}$ time. 
Thus, the overall complexity remains 
$\bigo{n \cdot m^{1.69} + n^{0.55} \cdot m^2}$.
\end{proof}

\subsection{Agenda Control and Coalition Manipulation}
We now investigate the complexity of the {\prob{Agenda Control}} and {\prob{Coalition Manipulation}} problems for TSMR. We show that both problems can be solved within the same time complexity. We first consider {\prob{Agenda Control}}.

\begin{theorem}
\label{thm-agenda-control-p-general}
{\prob{Agenda Control-TSMR}} can be solved in time $\bigo{n \cdot m^{1.69} + n^{0.55} \cdot m^2}$, 
where~$m$ is the number of candidates and~$n$ is the number of votes in the election.
\end{theorem}

\begin{proof}
Let $((C, V), p)$ be an instance of {\prob{Agenda Control-TSMR}} with $m = \abs{C}$ and $n = \abs{V}$. Let~$G$ be the majority graph of $(C, V)$. Let $A=C\setminus (\inneighbor{G}{p}\cup \{p\})$ be the set of candidates from $C\setminus \{p\}$ not beating~$p$ with respect to~$V$. 
We construct an agenda~$\rhd$ as follows. We place all candidates from~$A$ first in the agenda~$\rhd$, in arbitrary order, followed by~$p$. 
We maintain a set~$S$ of placed candidates, initially $S=A\cup \{p\}$. Next, candidates from $\inneighbor{G}{p}$ are added iteratively: in each iteration, let $S' = \outneighbor{G}{S}$, place candidates from $S'$ next in the agenda (in arbitrary order), and update $S \coloneqq S \cup S'$. Repeat until $S' = \emptyset$. 

After the iterations, if all candidates are placed,~$p$ wins and the algorithm returns ``\yes''. Otherwise, any remaining unplaced candidate beats~$p$ and is not beaten by any placed candidate. Hence,~$p$ cannot be the winner, and the algorithm returns ``\no''.

We now analyze the time complexity of the algorithm. By Lemma~\ref{lemma-majority-graph}, the majority graph~$G$ can be computed in time~$\bigo{n \cdot m^{1.69} + n^{0.55} \cdot m^2}$. The set~$A$ can be determined in time~$\bigo{m}$. Placing the candidates in $A \cup \{p\}$ into the agenda takes time~$\bigo{m}$, and placing the remaining candidates takes time~$\bigo{m^2}$.  
Overall, the algorithm runs in time 
$\bigo{n \cdot m^{1.69} + n^{0.55} \cdot m^2}$.
\end{proof}

We now present the result for {\prob{Coalition Manipulation}}.

\begin{theorem}
\label{thm-manipulation-p}
{\prob{Coalition Manipulation-TSMR}} can be solved in time $\bigo{n \cdot m^{1.69} + n^{0.55} \cdot m^2}$, where~$m$ is the number of candidates and~$n$ is the number of votes in the election.
\end{theorem}

\begin{proof}
Let $I=((C, V), p, \rhd, k)$ be an instance of {\prob{Coalition Manipulation-TSMR}}, with $m = \abs{C}$ and $n = \abs{V}$. If $k>n$, we immediately return ``\yes''; in this case,~$p$ can be made the TSMR winner by adding~$k$ votes ranking~$p$ first. 

We assume now that $k\leq n$. Let $B = C \setminus \{p\}$. Let~$V'$ be the multiset of $k$ votes with the same preference $p \succ \overrightarrow{B}$, where $\overrightarrow{B}$ is the linear order induced by $\rhd$ on $B$. We return ``\yes'' if and only if $p$ is the TSMR winner of $(C, V \cup V', \rhd)$.

Regarding the running time, constructing~$V'$ takes $\bigo{k \cdot m}$ time. By Lemma~\ref{lem-TSMR-winner-determination}, verifying whether~$p$ is the TSMR winner can be done in time $\bigo{n \cdot m^{1.69} + n^{0.55} \cdot m^2}$.

To prove correctness, we assume that $I$ is a {\yesins}, and show that $V'$ is a feasible solution of $I$. Observe that~$I$ has a feasible solution~$U$ where~$p$ is ranked first in all votes. If~$U$ contains a vote~$\succ$ with two  candidates~$x, y\in B$ such that~$y$ is ranked immediately before~$x$ in $\succ$ but $x \rhd y$, we swap their positions. 
Any candidate previously beaten by at least one of its predecessors remains so after the swap, and any candidate not previously beating~$p$ still does not beat~$p$. Consequently,~$p$ remains the TSMR winner after the swap. 
By exhaustively applying such swaps, we eventually transform~$U$ into~$V'$ without losing feasibility.
\end{proof}

\subsection{Constructive Controls}
\label{sec-control}
In this section, we study constructive control problems for TSMR.
We first present results for control by adding/deleting votes. We show that these problems are {\wbh} with respect to several meaningful parameters, both when the distinguished candidate is first in the agenda and when they are last.

\begin{theorem}
\label{thm-ccav-hard}
{\prob{CCAV-TSMR}} is {\memph{\wbh}} with respect to the number of added votes plus the number of registered votes, even when the distinguished candidate is first in the agenda.
\end{theorem}

\begin{proof}
We prove the theorem via a reduction from {\prob{RBDS}}. Let $I=(G, \kappa)$ be an instance of {\prob{RBDS}}, where~$G$ is a bipartite graph with the bipartition~$(\rs, \bs)$. We construct an instance of {\prob{CCAV-TSMR}} as follows. For each vertex in~$G$, we create a candidate with the same symbol. In addition, we introduce a candidate~$p$. Let $C = \bs \cup \rs \cup \{p\}$, and define the agenda as $\rhd = (p, \overrightarrow{\bs}, \overrightarrow{\rs})$. We create a multiset~$V$ of $\kappa+1$ registered votes: 
\begin{itemize}
    \item $\kappa$ votes with the preference $\overleftarrow{\bs}\Succ \overleftarrow{\rs}\Succ p$; and
    \item one vote with the preference $\overleftarrow{\rs}\Succ p\Succ \overleftarrow{\bs}$.
\end{itemize}
Next, for each $b \in \bs$, we create an unregistered vote~$\succ_b$ with the preference 
\[p\Succ \left(\overleftarrow{\rs}\setminus \neighbor{G}{b}\right) \Succ b \Succ \left(\overleftarrow{\rs}[\neighbor{G}{b}]\right) \Succ \left(\overleftarrow{\bs}\setminus \{b\}\right).\]
Let~$W$ be the set of these~$\abs{\bs}$ unregistered votes.
Setting $k=\kappa$, the resulting instance of {\prob{CCAV-TSMR}} is $g(I)=((C, V\cup W), p, \rhd, k)$.  

We now prove the correctness of this reduction, showing that~$I$ is a {\yesins} of {\prob{RBDS}} if and only if~$g(I)$ is a {\yesins} of {\prob{CCAV-TSMR}}.

$(\Rightarrow)$ Suppose that~$I$ is a {\yesins}, i.e., there exists $\bs'\subseteq \bs$ such that $\abs{\bs'}=\kappa$ and~$\bs'$ dominates~$\rs$ in~$G$. Let $W'=\{\succ_b\, \setmid b\in \bs'\}$ be the set of the~$\kappa$ unregistered votes corresponding to~$\bs'$, and define $\elec=(C, V\cup W')$. Note that $\abs{V \cup W'} = 2\kappa + 1$. We show that~$p$ is the TSMR winner of~$\elec$ with respect to~$\rhd$. 

For any $b \in \bs$, one registered vote and all the~$\kappa$ votes in~$W'$ rank~$p$ before~$b$, so~$p$ beats~$b$ in~$\elec$. Since~$p$ precedes all candidates from~$\bs$ in~$\rhd$, no candidate from~$\bs$ can win~$\elec$. 

Now consider a candidate $r \in \rs$. Since~$\bs'$ dominates~$\rs$, there exists $b \in \bs'$ with $r \in \neighbor{G}{b}$. By construction,~$b$ is ranked before~$r$ in $\succ_b$. Additionally, $\kappa$ registered votes rank $b$ before~$r$. Thus, in total, at least $\kappa+1$ votes in $V \cup W'$ rank $b$ before $r$. Since $b\rhd r$, it follows that $r$ cannot win $\elec$. Consequently, no candidate from $\rs$ can win  $\elec$. 

Hence,~$p$ is the TSMR winner of~$\elec$ with respect to~$\rhd$. Since $\abs{W'}=\abs{\bs'}=\kappa=k$,~$g(I)$ is a {\yesins}.

$(\Leftarrow)$ Suppose that~$g(I)$ is a {\yesins}, i.e., there exists $W'\subseteq W$ of at most~$\kappa$ votes such that~$p$ is the TSMR winner of $(C,V\cup W')$ with respect to~$\rhd$. Observe that~$W'$ must contain exactly~$\kappa$ votes; otherwise, some candidate in~$\bs$ would prevent~$p$ from winning. Next, every candidate in~$\rs$ beats~$p$ and also all their predecessors in~$\rs$ with respect to $V\cup W'$, regardless of which~$\kappa$ votes are included in~$W'$. Since~$p$ precedes all those in~$\rs$, every $r \in \rs$ must be beaten by some candidate in~$\bs$. Consequently, for each $r \in \rs$, there exists at least one vote in~$W'$ ranking some~$b\in \bs$ before~$r$. By construction, this vote must be~$\succ_b$ such that~$b$ dominates~$r$ in~$G$. Therefore, the set $\{b \in \bs\, \setmid\, \succ_b \in W'\}$ dominates~$\rs$, and hence~$I$ is a {\yesins}.
\end{proof}

We now examine the scenario where the distinguished candidate occupies the final position in the agenda. By Observation~\ref{obs-a}, the last candidate is the TSMR winner if and only if it is a weak Condorcet winner. The {\wahns} result for {\prob{CCAV-Condorcet}} by Liu~et~al.~\cite{DBLP:journals/tcs/LiuFZL09} can be adapted to show the same hardness result for \prob{CCAV-WeakCondorcet}, which asks whether a limited number of votes can be added to make the distinguished candidate a weak Condorcet winner. We strengthen the result by establishing a {\wbhns} reduction, thereby ruling out completeness for~{\wa}.

\begin{theorem}
\label{thm-ccav-hard-last}
{\prob{CCAV-TSMR}} is {\memph\wbh} parameterized by the total number of registered and added votes, even when the distinguished candidate is last in the agenda. 
\end{theorem}

\begin{proof}
We prove the theorem via a reduction from {\prob{RBDS}}. Let $I=(G, \kappa)$ be an instance of {\prob{RBDS}}, where~$G$ is a bipartite graph with the bipartition~$(\rs, \bs)$. We create an instance of {\prob{CCAV-TSMR}} as follows. The candidate set is $C=\rs\cup \{p, q\}$, and the agenda is $\rhd=(\overrightarrow{\rs}, q, p)$. We create a multiset~$V$ of~$\kappa$ registered votes:
\begin{itemize}
    \item $\kappa-1$ votes with the preference $q\Succ p\Succ \overrightarrow{\rs}$; and
    \item one vote with the preference $q\Succ \overrightarrow{\rs}\Succ p$.
\end{itemize}
For each $b\in \bs$, we create one unregistered vote~$\succ_b$ with the preference
\[\left(\overrightarrow{\rs}\setminus \neighbor{G}{b}\right) \Succ p \Succ \left(\overrightarrow{\rs}[\neighbor{G}{b}]\right) \Succ q.\] 
For a given $\bs'\subseteq \bs$, let~$W(\bs')=\{\succ_b\, \setmid b\in \bs'\}$ be the multiset of unregistered votes corresponding to~$\bs'$. Let $k=\kappa$. The instance of {\prob{CCAV-TSMR}} is $g(I)=((C, V\cup W(\bs)), p, \rhd, k)$. We now show the correctness of the reduction.

$(\Rightarrow)$ Assume that~$I$ is a {\yesins} of {\prob{RBDS}}, i.e., there exists $\bs'\subseteq \bs$ such that $\abs{\bs'}=\kappa$ and~$\bs'$ dominates~$\rs$ in~$G$. Let $\elec=(C, V\cup W(\bs'))$. Clearly,~$E$ contains exactly $\abs{V\cup W(\bs')}=2\kappa$ votes. We show that~$g(I)$ is a {\yesins} of {\prob{CCAV-TSMR}} by proving that~$p$ is the TSMR winner of~$\elec$ with respect to~$\rhd$. 
First, since all $\kappa$ registered votes rank~$q$ before~$p$, the candidates~$p$ and~$q$ are tied in~$\elec$. Second, since~$\bs'$ dominates~$\rs$, for every $r \in \rs$, there exists $b \in \bs'$ that dominates~$r$. By construction,~$p$ is ranked before~$r$ in the vote $\succ_b\, \in W(\bs')$ corresponding to~$b$. Together with the $\kappa-1$ registered votes from~$V$ that also rank~$p$ before~$r$, there are at least~$\kappa$ votes in~$E$ ranking~$p$ before~$r$. Therefore,~$p$ either beats or ties with~$r$ in~$\elec$. Because this holds for all candidates $r \in \rs$, and~$p$ is last in the agenda~$\rhd$, it follows that~$p$ is the TSMR winner of~$\elec$ with respect to~$\rhd$. 

$(\Leftarrow)$ Assume that $g(I)$ is a {\yesins} of  {\prob{CCAV-TSMR}}, i.e., there exists $\bs'\subseteq \bs$ such that $\abs{\bs'}\leq k=\kappa$ and~$p$ is the TSMR winner of $\elec=(C, V\cup W(\bs'))$ with respect to~$\rhd$. Since~$p$ is last in~$\rhd$, it cannot be beaten by any candidate in~$\elec$. It follows that $\abs{\bs'} = \kappa$, since otherwise~$q$ would beat~$p$. We have $\abs{V \cup W(\bs')} = 2\kappa$. Now consider any $r \in \rs$. As exactly $\kappa - 1$ registered votes rank~$p$ before~$r$, there exists at least one $b \in \bs'$ such that~$p$ is ranked before~$r$ in the vote~$\succ_b$. This implies that for each $r \in \rs$ there exists $b \in \bs'$ that dominates $r$, and hence $\bs'$ dominates $\rs$ in~$G$. Thus,~$I$ is a {\yesins} of {\prob{RBDS}}.
\end{proof}

We move on to constructive control by deleting votes. This problem admits two natural parameters: the maximum number~$k$ of votes that may be deleted and its dual~$n - k$, the minimum number of votes that must remain after deletion, where~$n$ is the total number of votes. We show that the problem is {\wbh} with respect to both parameters, even when the distinguished candidate appears either first or last in the agenda. These results are captured in Theorems~\ref{thm-ccdv-hard-deleted-first}--\ref{thm-ccdv-wbh-not-deleted-last}.

\begin{theorem}
\label{thm-ccdv-hard-deleted-first}
{\prob{CCDV-TSMR}} is {\memph\wbh} with respect to the number of deleted votes, even when the distinguished candidate appears first in the agenda.
\end{theorem}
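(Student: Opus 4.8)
The plan is to give a parameterized reduction from \prob{RBDS}, parameterized by $\kappa$, to \prob{CCDV} for TSMR with the distinguished candidate placed first in the agenda, so that the number $k$ of deleted votes equals $\kappa$. The high-level idea mirrors the reduction in the proof of Theorem~\ref{thm-ccav-hard}, but inverted: instead of \emph{adding} votes that ``help'' $p$ by covering red vertices, we will have a base election in which $p$ loses because every red candidate $r$ beats all of its predecessors and beats $p$, and we will include, for each blue vertex $b$, a ``bad'' vote that is responsible (together with a fixed block of auxiliary votes) for making the red candidates beat $p$ in a way that can only be undone by deleting exactly one vote per covered red vertex. The selection of $\kappa$ deleted votes then corresponds to a choice of $\kappa$ blue vertices whose deletion kills the ``$r$ beats $p$'' relation for every $r$, i.e.\ a dominating set.

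Concretely, I would take candidate set $C = R \cup \{p\}$ (possibly plus one or two dummy candidates to control parities and forward arcs), agenda $\rhd = (p, \overrightarrow{R})$. I would design the vote multiset so that: (i) for every $r \in R$, $r$ currently beats $p$ and beats every red candidate before it in $\overrightarrow{R}$, hence the current TSMR winner is the last red candidate, not $p$; (ii) deleting a vote $\succ_b$ associated with $b \in B$ flips the pairwise comparison $n_V(r,p)$ in $p$'s favor exactly for those $r \in \neighbor{G}{b}$, while leaving all red-versus-red comparisons untouched (or harmless); (iii) no non-$\succ_b$ vote deletion can help, and deleting fewer than $\kappa$ votes leaves some $r$ still beating $p$. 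With margins tuned so that each $r$ beats $p$ by a majority of exactly $2$ (a surplus supplied jointly by one ``balancing'' vote per $b \in \neighbor{G}{r}$ and corrected by removing $\succ_b$), deleting the $\kappa$ votes $\{\succ_b : b \in B'\}$ makes $p$ a weak Condorcet winner relative to $R$, and since $p$ is first in the agenda and beats/ties all red candidates, Observation~\ref{obs-a}(4) (or a direct argument) gives that $p$ is the TSMR winner. The converse direction argues that a successful deletion set of size $\le \kappa$ must consist entirely of votes of the form $\succ_b$ (any other deletion cannot fix some $r$) and must cover every $r$, yielding a dominating set of size $\le \kappa$, hence of size exactly $\kappa$.

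The main obstacle I anticipate is the usual bookkeeping for sequential rules: it is not enough to make $p$ beat or tie every red candidate; I must also guarantee that \emph{no red candidate} survives as a source in $G_E^{\rhd}$ after the deletion, and that the red-versus-red arcs never conspire to leave some $r$ without a forward in-neighbor once the $r$-to-$p$ arcs are removed. The clean way around this is to keep $p$ first and fix a single total order $\overrightarrow{R}$ with enough ``internal'' votes so that each red candidate is always beaten by its immediate predecessor in $\overrightarrow{R}$ regardless of which $\succ_b$'s are deleted; then the only obstruction to $p$ winning is whether some $r$ still beats $p$, which is precisely controlled by domination. A secondary technical point is parity: to make ``beats $p$ by exactly $2$'' work out, the total number of votes must have the right parity, which may force adding one extra fixed vote or one dummy candidate; this does not affect the parameter since $k = \kappa$ throughout. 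Finally I would note, as in the paper's Remarks paragraph, that the constructed election can be made tie-free after the deletion (or the reduction slightly adapted), so the result also holds under the no-tie assumption, and that membership in \np\ is immediate so the problem is also \npc.
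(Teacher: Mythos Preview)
Your plan has a genuine gap in item~(ii). A vote is a total order on~$C$, so $\succ_b$ must rank~$p$ relative to \emph{every} $r\in R$; deleting it therefore shifts the margin $n_V(r,p)-n_V(p,r)$ for all~$r$, not only for $r\in\Gamma_G(b)$. Concretely, if $\succ_b$ places~$r$ above~$p$ exactly when $r\in\Gamma_G(b)$ (and below otherwise), then deleting the~$\kappa$ votes $\{\succ_b:b\in B'\}$ changes the margin of~$r$ over~$p$ by $\kappa-2t$, where $t=|\{b\in B':r\in\Gamma_G(b)\}|$. For $t=1$ and $\kappa\ge 3$ the margin \emph{increases}, so ``each~$r$ beats~$p$ by exactly~$2$, and one covering deletion flips it'' cannot work. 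Your item~(iii) is likewise unenforceable with $C=R\cup\{p\}$ alone: nothing singles out the $\succ_b$-votes as the only useful deletions. Two further slips: ``each red candidate is beaten by its immediate predecessor in $\overrightarrow{R}$'' is the wrong direction for your purpose (you need each~$r$ to beat all its red predecessors so that~$p$ is its only possible forward in-neighbor), and with~$p$ first being a \emph{weak} Condorcet winner is not sufficient---Observation~\ref{obs-a}(4) requires strict Condorcet, and if~$p$ merely ties some~$r$ that beats all its red predecessors, that~$r$ has no forward in-neighbor and, being later than~$p$, wins.

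The paper's reduction fixes all of this by reversing the roles. It adds two auxiliary candidates $q,q'$ and tunes the six vote groups so that (a)~$p$ already beats every $r\in R$ and~$q'$, but~$q$ beats~$p$, so~$p$ does not win initially; (b)~every candidate in $C\setminus\{p\}$ beats all its predecessors in $C\setminus\{p\}$ no matter which $\le\kappa$ votes are removed, hence~$p$ wins iff~$p$ beats everyone; (c)~to make~$p$ beat~$q$ one must delete exactly~$\kappa$ votes ranking~$q$ above~$p$, and the $q'$-constraint forces these deletions to come from~$V(B)$; (d)~after deleting $V(B')$, the $p$-versus-$r$ margin becomes $2t-1$, which is positive iff~$r$ is dominated by some $b\in B'$. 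Thus the dominating-set condition is precisely what \emph{preserves} $p$'s lead over each~$r$ under the forced deletion, rather than what creates it. You would also need the standard normalisation (which your sketch omits) that every red vertex has the same degree~$\ell$, without which the margins across~$R$ cannot be made uniform.
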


\begin{proof}
We prove the theorem via a reduction from {\prob{RBDS}}. Let $I=(G, \kappa)$ be an instance of {\prob{RBDS}}, where~$G$ is a bipartite graph with the bipartition~$(\rs, \bs)$. We assume that~$G$ contains no isolated vertices, $\kappa \geq 4$, and every red vertex has degree~$\ell$, where $\ell \geq 1$. These assumptions do not affect the {\wbhns} of the problem.\footnote{The assumptions that~$G$ contains no isolated vertices and that $\kappa \geq 4$ are straightforward. If an instance does not satisfy the third assumption, we can obtain an equivalent instance as follows. Let~$\ell$ be the maximum degree of the vertices in~$\rs$. For each $r \in \rs$ with degree less than~$\ell$, we introduce new degree-$1$ vertices adjacent only to~$r$ until its degree equals~$\ell$. Assuming the original graph~$G$ has no isolated vertices, there exists an optimal solution (a subset $\bs' \subseteq \bs$ dominating~$\rs$ with minimum cardinality) in the modified instance that excludes all newly introduced vertices, ensuring the equivalence between the two instances.} We construct an instance of {\prob{CCDV-TSMR}} as follows. The candidate set is $C=\rs\cup \{p, q, q'\}$, and the agenda is $\rhd=(p,q',\overrightarrow{\rs},q)$. We create six groups of votes:
\begin{itemize}
    \item a multiset~$V_1$ of~$\ell+1$ votes with the preference $q'\Succ p\Succ q\Succ \overleftarrow{\rs}$;
    \item a multiset~$V_2$ of $\kappa+\ell-2$ votes with the preference $q\Succ p\Succ \overleftarrow{\rs}\Succ q'$;
    \item a multiset~$V_3$ of $\abs{\bs}-\kappa+1$ votes with the preference $\overleftarrow{\rs}\Succ p\Succ q\Succ q'$;
    \item a singleton~$V_4$ of one vote with the preference $\overleftarrow{\rs}\Succ q\Succ p\Succ q'$;
    \item a multiset~$V_5$ of $\kappa-2$ votes with the preference $\overleftarrow{\rs}\Succ q'\Succ p\Succ q$;
    \item for every blue vertex $b\in \bs$, one vote~$\succ_b$ with the preference
    $q\Succ q'\Succ \left(\overleftarrow{\rs}[\neighbor{G}{b}]\right) \Succ p \Succ \left(\overleftarrow{\rs}\setminus \neighbor{G}{b}\right)$.
\end{itemize}
Let~$V$ denote the multiset of the above $2\abs{\bs}+\kappa+2\ell-1$ votes. For a given $\bs'\subseteq \bs$, let $V(\bs')=\{\succ_b\, \setmid b\in \bs'\}$. Note that each $r \in \rs$ is ranked after~$p$ in exactly $\abs{\bs} - \ell$ votes from~$V(\bs)$.

Defining $k = \kappa$, the resulting instance of {\prob{CCDV-TSMR}} is $g(I) = ((C, V), p, \rhd, k)$, which can be constructed in polynomial time. Now we prove the correctness of the reduction. 

$(\Rightarrow)$ Assume that~$I$ is a {\yesins} of {\prob{RBDS}}, i.e., there exists $\bs'\subseteq \bs$ with $\abs{\bs'}=\kappa$  such that~$\bs'$ dominates~$\rs$ in $G$. Let $\elec=(C, V\setminus V(\bs'))$. To prove that~$g(I)$ is a {\yesins}, we show that~$p$ is the TSMR winner of~$\elec$ with respect to the agenda~$\rhd$. 

It suffices to show that~$p$ beats every other candidate in~$\elec$. Let $r\in \rs$. Since~$\bs'$ dominates~$\rs$, there exists at least one vertex $b\in \bs'$ that dominates~$r$. By construction, the vote~$\succ_b$ ranks~$r$ before~$p$. Since exactly $\abs{\bs}-\ell$ votes in~$V(\bs)$ rank~$p$ before~$r$, $\abs{V(\bs')}=\abs{\bs'}=\kappa$, and $\succ_b\, \in V(\bs')$, at least $(\abs{\bs}-\ell)-(\kappa-1)$ votes in $V(\bs \setminus\bs')$ rank~$p$ before~$r$. Additionally, all votes in $V_1\cup V_2$ rank~$p$ before~$r$. Thus, at least $(\abs{\bs}-\ell)-(\kappa-1)+(\ell+1)+(\kappa+\ell-2)=\abs{\bs}+\ell$ votes in~$\elec$ rank~$p$ before~$r$. Since $\abs{V\setminus V(\bs')}=2\abs{\bs}+2\ell-1$, we know that~$p$ beats~$r$ in~$\elec$. One can verify that  $\abs{\bs}+\ell$ votes rank~$p$ before~$q$ and~$q'$ in $V\setminus V(\bs')$, implying that~$p$ also beats~$q$ and~$q'$ in~$\elec$.

$(\Leftarrow)$ Assume that $g(I)$ is a {\yesins}, i.e., there exists $V'\subseteq V$ with $\abs{V'}\leq k=\kappa$ such that~$p$ is the TSMR winner of $\elec=(C, V\setminus V')$ with respect to~$\rhd$. By construction, regardless of which votes are contained in~$V'$, every candidate from $C\setminus \{p\}$ beats all her predecessors in $C\setminus \{p\}$. Since~$p$ is first in the agenda and wins~$\elec$, it follows that~$p$ beats all other candidates. This implies that $V'$ is disjoint from $V_1 \cup V_3 \cup V_5$ and that $\abs{V'} = \kappa$, since otherwise~$p$ could not beat~$q$ in~$\elec$. Similarly,~$V'$ must also be disjoint from $V_2 \cup V_4$, since otherwise~$p$ would not beat~$q'$. Consequently, we have $V' \subseteq V(\bs)$. Let $\bs'\subseteq \bs$ be such that $V(\bs')=V'$. 

We now show that~$\bs'$ dominates~$\rs$ in~$G$. Suppose, for contradiction, that there exists $r\in \rs$ not dominated by any vertex in~$\bs'$. By construction, all votes in~$V(\bs')$ rank~$p$ before~$r$. 
Among the votes in $V(\bs)$, exactly $\abs{\bs} - \ell$ rank~$p$ before~$r$, and since $\abs{V(\bs')} = \kappa$, there remain $(\abs{\bs} - \ell) - \kappa$ such votes in $V(\bs) \setminus V'$. 
Additionally, all votes in $V_1 \cup V_2$ rank~$p$ before~$r$. Thus, there are $(\abs{\bs}-\ell)-\kappa+\abs{V_1\cup V_2}=\abs{\bs}+\ell-1$ votes in~$\elec$ ranking~$p$ before~$r$. Since $\abs{V\setminus V'}=2\abs{\bs}+2\ell-1$, candidate~$p$ is beaten by~$r$ in~$\elec$, contradicting that~$p$ is the TSMR winner. Hence,~$\bs'$ dominates~$\rs$ in~$G$. Since $\abs{\bs'} = \kappa$, it follows that~$I$ is a {\yesins} of {\prob{RBDS}}.
\end{proof}

\begin{theorem}[$\bigstar$]
\label{thm-ccdv-hard-first}
{\prob{CCDV-TSMR}} is {\memph\wbh} with respect to the number of  votes not deleted, even when the distinguished candidate is first in the agenda.
\end{theorem}

\begin{theorem}[$\bigstar$]
\label{thm-ccdv-hard}
{\prob{CCDV-TSMR}} is {\emph{\wbh}} with respect to the number of deleted votes, even when the distinguished candidate is last in the agenda.
\end{theorem}

By Observation~\ref{obs-a}, Theorem~\ref{thm-ccdv-hard} strengthens the {\wahns} of {\prob{CCDV-WeakCondorcet}} established by Liu~et~al.~\cite{DBLP:journals/tcs/LiuFZL09}, which asks whether a limited number of votes can be deleted to make a distinguished candidate a weak Condorcet winner.

\begin{theorem}[$\bigstar$]
\label{thm-ccdv-wbh-not-deleted-last}
{\prob{CCDV-TSMR}} is {\memph\wbh} with respect to the number of votes not deleted, even when the distinguished candidate is last in the agenda.
\end{theorem}

We now study constructive control by adding or deleting candidates. In contrast to voter control, we establish only a single hardness result, presented in the following theorem.

\begin{theorem}
\label{thm-ccac-hard}
{\prob{CCAC-TSMR}} is {\memph\wbh} with respect to the number of added candidates. This holds even when the distinguished candidate is first in the agenda. 
\end{theorem}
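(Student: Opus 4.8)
The plan is to reduce from \prob{RBDS} (which is {\wbh} {\wrt}~$\kappa$), using candidate addition to encode the covering requirement; this parallels the voter-control reductions above but replaces ``solution votes'' by ``solution candidates''. Given an instance $(G=(R\cup B, A), \kappa)$ with $\kappa\leq\abs{B}$, I would take the red vertices as registered candidates, introduce one further registered candidate~$p$, and put the blue vertices into the pool~$D=B$ of unregistered candidates. The agenda is $\rhd=(p, \overrightarrow{B}, \overrightarrow{R})$, so that~$p$ is first and every blue candidate precedes every red candidate, and I set $k=\kappa$; note that restricting $\rhd$ to any $C\cup B'$ still keeps~$p$ first.

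The votes are designed so that the majority graph on $\{p\}\cup B\cup R$ is a tournament (no ties) with the following arcs: $p$ beats every $b\in B$; every $r\in R$ beats~$p$; for $b\in B$ and $r\in R$ there is an arc $b\to r$ if $\edge{b}{r}\in A$ and an arc $r\to b$ otherwise; and, with the red vertices ordered as in $\overrightarrow{R}$, every later red candidate beats every earlier one (arcs among blue candidates are irrelevant and may be fixed arbitrarily). By the standard McGarvey construction such a tournament is realizable by $\bigo{(\abs{B}+\abs{R})^2}$ votes, and, crucially, the pairwise majority margin between two candidates is unaffected by the presence or absence of other candidates, so these relations are inherited by every sub-election obtained by adding a set $B'\subseteq B$ (which remains tie-free, matching the no-tie convention).

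The correctness rests on one structural fact: since~$p$ is the first candidate of the agenda it has no predecessor, hence no in-neighbor in $G_E^{\rhd'}$, so~$p$ always belongs to the set~$C'$ of candidates without in-neighbors in $G_E^{\rhd'}$; consequently~$p$ is the TSMR winner if and only if $C'=\{p\}$, i.e., every surviving candidate other than~$p$ is beaten by one of its predecessors. For the forward direction, if $B'\subseteq B$ dominates~$R$ with $\abs{B'}=\kappa$, then after adding~$B'$ each $b\in B'$ is beaten by its predecessor~$p$, and each $r\in R$ is beaten by some $b\in B'\cap\neighbor{G}{r}$, which precedes~$r$ because blue candidates come first; hence $C'=\{p\}$ and~$p$ wins. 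For the converse, if adding some $B'$ with $\abs{B'}\leq\kappa$ makes~$p$ win, then $C'=\{p\}$, so every $r\in R$ is beaten by a predecessor; its predecessors are~$p$ (which loses to~$r$), the red candidates before~$r$ (each of which also loses to~$r$), and the members of~$B'$; therefore some $b\in B'$ beats~$r$, which forces $\edge{b}{r}\in A$. Thus~$B'$ dominates~$R$, and since $\abs{B'}\leq\kappa\leq\abs{B}$ we may pad~$B'$ to cardinality exactly~$\kappa$, so the \prob{RBDS} instance is a \yesins.

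The only step requiring genuine care is the vote construction: one must verify that the McGarvey gadget installs all the prescribed arcs simultaneously while creating no spurious arc — in particular that no red candidate ends up beating a later red candidate and that~$p$ does not beat any red candidate — since a single stray arc in either place would let the corresponding red candidate be ``covered for free'' and break the equivalence. Everything else is a direct unfolding of the definition of TSMR together with the fact that~$p$ being first reduces the winning condition to $C'=\{p\}$.
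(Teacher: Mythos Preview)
Your proposal is correct and follows essentially the same approach as the paper: the identical reduction from \prob{RBDS} with $C=R\cup\{p\}$, $D=B$, agenda $(p,\overrightarrow{B},\overrightarrow{R})$, $k=\kappa$, and the same prescribed majority relations realized via McGarvey's theorem. Your explicit observation that~$p$ being first reduces the winning condition to $C'=\{p\}$ is a clean way to phrase the correctness argument, and your caution about the McGarvey step is unnecessary since that theorem guarantees any tournament is realizable exactly.
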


\begin{proof}
We prove the theorem via a reduction from {\prob{RBDS}}. Let $I=(G, \kappa)$ be an instance of {\prob{RBDS}}, where~$G$ is a bipartite graph with the bipartition~$(\rs, \bs)$. We construct an instance $g(I)$ of {\prob{CCAC-TSMR}}  as follows. For each vertex in~$G$, we create a corresponding candidate, using the same symbol for notational simplicity. Additionally, we introduce a distinguished candidate~$p$. Let $C = \rs \cup \{p\}$ and $D = \bs$. Set $k = \kappa$, and define the agenda as~$\rhd = (p, \overrightarrow{\bs}, \overrightarrow{\rs})$. 

We then construct a set~$V$ of votes on $C\cup D$ satisfying:
\begin{enumerate}
\item[(i)] every candidate in~$\rs$ beats all its predecessors in~$\rs\cup \{p\}$;
\item[(ii)] $p$ beats every candidate in~$\bs$; and
\item[(iii)] for each $r\in \rs$ and each $b\in \bs$, if~$b$ dominates~$r$ in~$G$, then~$b$ beats~$r$; otherwise,~$r$ beats~$b$.
\end{enumerate}

By McGarvey’s theorem~\cite{McGarvey1953}, such votes can be constructed in polynomial time. The resulting instance is $g(I)=((C \cup D, V), p, \rhd, k)$. 
We now prove the correctness of the reduction. 

$(\Rightarrow)$ Assume that $I$ is a {\yesins} of {\prob{RBDS}}, i.e., there exists a subset $\bs'\subseteq \bs$ of~$\kappa$ vertices that dominate~$\rs$ in~$G$. Consider the election $\elec=(C \cup \bs', V)$. Let~$\rhd'$ be the agenda~$\rhd$ restricted to $C \cup \bs'$. Consider any~$r\in \rs$. Let~$b$ be a vertex in~$\bs'$ that dominates~$r$ in~$G$. By Condition (iii),~$b$ beats~$r$ in~$\elec$, and since~$b$ precedes~$r$ in~$\rhd'$,~$r$ cannot be the TSMR winner of~$\elec$ with respect to~$\rhd'$.  
By Condition~(ii), every candidate in~$\bs'$ is beaten by~$p$. Since~$p$ precedes candidates in~$\bs'$, no candidate from~$\bs'$ can be the TSMR winner of~$\elec$. Hence,~$p$ is the TSMR winner of~$\elec$, and~$g(I)$ is a {\yesins}. 

$(\Leftarrow)$ Assume that~$I$ is a {\noins} of {\prob{RBDS}}. Consider any subset $\bs' \subseteq \bs$ of size at most~$\kappa$. Let $\elec = (C \cup \bs', V)$, and let~$\rhd'$ be the agenda~$\rhd$ restricted to~$C \cup \bs'$. Let $r \in \rs$ be a vertex not dominated by any vertex in~$\bs'$. By Condition~(i), $r$ beats all its predecessors in~$\rs\cup \{p\}$ in~$\elec$, and by Condition (iii),~$r$ beats all candidates in~$\bs'$. Therefore,~$r$ beats all its predecessors in~$\elec$. Hence,~$r$ is a source in the spanning subdigraph of the majority graph of~$\elec$ induced by the forward arcs with respect to~$\rhd'$. Since~$p$ precedes~$r$ in~$\rhd'$,~$p$ cannot be the TSMR winner of~$\elec$.  
As this holds for all $\bs' \subseteq \bs$ of size at most~$\kappa$, we conclude that~$g(I)$ is a {\noins}.
\end{proof}

For the case where the distinguished candidate appears last in the agenda, the following corollary follows from Observation~\ref{obs-a} and the fact that weak Condorcet winners are immune to \prob{CCAC}~\cite{Bartholdi92howhard}.

\begin{corollary}
\label{cor-ccac-last-immne}
If the distinguished candidate is last in the agenda, {\memph{TSMR}} is immune to {\prob{CCAC}}.
\end{corollary}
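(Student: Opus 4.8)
The plan is to derive the corollary directly from two facts already available in the excerpt: part (2) of Observation~\ref{obs-a}, which says that the last candidate in an agenda is the TSMR winner if and only if it is a weak Condorcet winner, and the known immunity of the weak Condorcet winner notion to \prob{CCAC} due to Bartholdi, Tovey, and Trick~\cite{Bartholdi92howhard}. Recall that a voting rule is \emph{immune} to \prob{CCAC} if it is impossible to turn a non-winner into a winner by adding candidates; so I must show that if $p$ is not the TSMR winner of the original (registered) election with $p$ last in the agenda, then $p$ is not the TSMR winner after adding any subset $D'$ of unregistered candidates either.

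First I would set up notation: let $((C \cup D, V), p, \rhd, k)$ be an instance of \prob{CCAC} for TSMR where $p$ is the last candidate of $\rhd$, and suppose $p$ is not the TSMR winner of $(C, V, \rhd')$ where $\rhd'$ is $\rhd$ restricted to $C$. By Observation~\ref{obs-a}(2), since $p$ is last in $\rhd'$, this means $p$ is \emph{not} a weak Condorcet winner of $(C, V)$, i.e., there is some $c \in C$ with $n_V(c,p) > n_V(p,c)$. Now take any $D' \subseteq D$ and consider the election $((C \cup D'), V, \rhd'')$ where $\rhd''$ is $\rhd$ restricted to $C \cup D'$; note $p$ is still the last candidate in $\rhd''$ because adding candidates that all lie before $p$ in $\rhd$ (or wherever they lie, since $p$ is globally last in $\rhd$) keeps $p$ last. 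The pairwise margin $n_V(c,p)$ is unchanged by the addition of candidates — it depends only on the votes restricted to $\{c,p\}$, which are untouched — so $c$ still beats $p$ in $C \cup D'$, meaning $p$ is still not a weak Condorcet winner, hence by Observation~\ref{obs-a}(2) again not the TSMR winner of $((C \cup D'), V, \rhd'')$. This establishes immunity.

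The one subtlety worth spelling out — and the closest thing to an obstacle, though it is minor — is the handling of the agenda restriction: I must confirm that restricting a linear order $\rhd$ (in which $p$ is last among $C \cup D$) to any superset-of-$C'$-style subset $C \cup D'$ still leaves $p$ in the last position. This is immediate since $p$ being the maximum element of $\rhd$ among all of $C \cup D$ implies it is the maximum among any subset containing it. A second point to note is that \prob{CCAC} as a special case of \prob{Constructive Multimode Control} keeps the vote multiset fixed (only $D' \subseteq D$ is added, with $W = \emptyset$), so no votes change and the margin argument goes through verbatim. I would present the argument in three or four sentences, essentially: non-winner $\Rightarrow$ not weak Condorcet (by Obs.~\ref{obs-a}(2)) $\Rightarrow$ some $c$ beats $p$, margin preserved under candidate addition $\Rightarrow$ still not weak Condorcet $\Rightarrow$ still not TSMR winner (by Obs.~\ref{obs-a}(2)), which is exactly immunity.
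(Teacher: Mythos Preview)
Your proposal is correct and follows exactly the same approach as the paper: the paper also derives this corollary directly from Observation~\ref{obs-a}(2) together with the known immunity of weak Condorcet winner to \prob{CCAC} from~\cite{Bartholdi92howhard}, without giving further detail. Your write-up simply spells out the (straightforward) argument that the paper leaves implicit.
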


For {\prob{CCDC-TSMR}}, a greedy polynomial-time algorithm can be obtained.

\begin{theorem}
\label{thm-ccdc-p}
\textsc{CCDC-TSMR} is solvable in time $\bigo{n \cdot m^{1.69} + n^{0.55} \cdot m^2}$, where~$m$ denotes the number of candidates, and~$n$ denotes the number of votes in the election.
\end{theorem}

\begin{proof}
Let $((C, V), p, \rhd, k)$ be an instance of \textsc{CCDC-TSMR}, with $m = \abs{C}$ and $n = \abs{V}$. 
To solve this instance, we proceed as follows: 
First, we remove all predecessors of~$p$ in the agenda~$\rhd$ that beat~$p$ with respect to~$V$. Second, we iteratively remove each successor~$c$ of~$p$ such that~$c$ is unbeaten by any of its current predecessors in the remaining agenda. 

These steps rely on constructing the majority graph of $(C, V)$, which, by Lemma~\ref{lemma-majority-graph}, can be done in time $\bigo{n \cdot m^{1.69} + n^{0.55} \cdot m^2}$. 
After these removals, candidate~$p$ becomes the TSMR winner. We return ``\yes'' if and only if  at most~$k$ candidates have been removed. 
The correctness of the algorithm is straightforward. The total running time is dominated by the construction of the majority graph. 
\end{proof}

\subsection{Destructive Controls}
We now turn to destructive control problems.  
It is known that {\prob{DCAV-WeakCondorcet}} and {\prob{DCDV-WeakCondorcet}} are solvable in polynomial time~\cite{DBLP:journals/ai/HemaspaandraHR07}. These problems ask whether a limited number of votes can be added (\prob{DCAV}) or deleted (\prob{DCDV}) to prevent a distinguished candidate from being a weak Condorcet winner. More precisely, they can be solved in time~$\bigo{m \cdot n}$, where~$m$ is the number of candidates and~$n$ is the total number of votes (both registered and unregistered) in the election. 
The main idea is to identify a candidate who beats the distinguished candidate in the final election. There are at most 
$m-1$ such candidates to consider. For each fixed choice, we greedily add or delete the required number of votes. 
Combining Observation~\ref{obs-a} with this result, we obtain the following corollary. 

\begin{corollary}[\cite{DBLP:journals/ai/HemaspaandraHR07}]
\label{cor-dcav-dcdv-p-last}
{\memph{\prob{DCAV-TSMR}}} and {\memph{\prob{DCDV-TSMR}}} are solvable in time~$\bigo{m \cdot n}$ 
when the distinguished candidate appears last in the agenda, where~$m$ is the number of candidates and~$n$ is the number of votes in the election.
\end{corollary}

Next, we show that if the distinguished candidate does not appear last in the agenda, both {\prob{DCAV-TSMR}} and {\prob{DCDV-TSMR}} become computationally hard. 
This holds whenever the distinguished candidate occupies any fixed position in the agenda other than the last. 

\begin{theorem}
\label{thm-dcav-hard-first}
{\prob{DCAV-TSMR}} is {\memph{\wbh}} with respect to the number of added votes plus the number of registered votes. This holds as long as the distinguished candidate is not last in the agenda.
\end{theorem}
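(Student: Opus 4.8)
The plan is a parameterized reduction from {\prob{RBDS}}, which is {\wbh} {\wrt}~$\kappa$. Given an instance $(G,\kappa)$ with $G=(R\muplus B, A)$, identify each red vertex with a candidate, add two fresh candidates~$p$ and~$q$, set $C=R\cup\{p,q\}$, take the agenda $\rhd=(p,\overrightarrow{R},q)$ (so~$p$ is first, in particular not last), and set the budget $k=\kappa$. The registered votes~$V$ are: $\kappa$ copies of $p\Succ q\Succ\overrightarrow{R}$, two copies of $p\Succ\overrightarrow{R}\Succ q$, one copy of $q\Succ p\Succ\overrightarrow{R}$, and one copy of $\overrightarrow{R}\Succ q\Succ p$, so $n_{\text{rg}}=\kappa+4$; since also $k=\kappa$, a reduction of this shape yields {\wbhns} {\wrt} $k+n_{\text{rg}}$. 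A direct count shows that in~$V$ candidate~$p$ beats~$q$ with margin exactly~$\kappa$, $p$ beats every $r\in R$ with margin $\kappa+2$, and~$q$ beats every $r\in R$ with margin $\kappa-2$; hence in the registered election~$p$, being first and beating everyone, is the TSMR winner. For every $b\in B$ create one unregistered vote~$\succ_b$ with preference $\left(\overrightarrow{R}\setminus\neighbor{G}{b}\right)\Succ q\Succ p\Succ\left(\overrightarrow{R}[\neighbor{G}{b}]\right)$, and let $W=\{\succ_b\setmid b\in B\}$. Adding~$\succ_b$ moves~$q$ one position ahead of~$p$ and moves both~$q$ and~$p$ ahead of exactly the candidates $r\in\neighbor{G}{b}$.

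For correctness, observe first that for every $W'\subseteq W$ with $\abs{W'}\le\kappa$, candidate~$p$ remains first (hence a source) and still beats every $r\in R$ (its margin $\kappa+2$ absorbs the at most~$\kappa$ adverse votes), so each~$r$ stays a non-source; therefore~$p$ is the TSMR winner of $(C,V\cup W')$ iff~$q$ is \emph{not} a source, i.e.\ iff some predecessor of~$q$ (either~$p$ or some $r\in R$) beats~$q$. Writing $j=\abs{W'}$, $B'=\{b\setmid\succ_b\in W'\}$, and $j_r=\abs{\{b\in B'\setmid r\in\neighbor{G}{b}\}}$, a short computation gives: $p$ beats~$q$ iff $j<\kappa$; and, when $j=\kappa$, a candidate $r\in R$ beats~$q$ iff $j_r=0$. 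Thus~$q$ becomes a source exactly when $j=\kappa$ and every $r\in R$ lies in $\neighbor{G}{b}$ for some $b\in B'$, i.e.\ exactly when~$B'$ is a size-$\kappa$ dominating set of~$R$; and in that case~$q$, being the last candidate, is the right-most source and therefore the TSMR winner, so~$p$ is not. Hence~$p$ can be prevented from winning by adding at most~$\kappa$ votes iff $(G,\kappa)$ is a {\yesins} of {\prob{RBDS}}. Finally, to move~$p$ to any non-first, non-last position, prepend a block~$Z$ of dummy candidates to the agenda and insert~$Z$ immediately after~$p$ in every vote of $V\cup W$; this makes~$p$ beat all of its predecessors (so~$p$ is still a source) and leaves every comparison used above untouched, so the same equivalence holds.

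I expect the real work to lie in two places. On the construction side, the registered votes must be chosen so that three margins hit prescribed values simultaneously using only $O(\kappa)$ votes: the margin of~$p$ over~$q$ must be \emph{exactly}~$\kappa$ — this is what forces the adversary to spend the whole budget before losing the edge over~$q$ — and the margin of~$q$ over each~$r$ must be \emph{exactly}~$\kappa-2$ — this is what makes a single covering vote enough to stop that particular~$r$ from beating~$q$ — while the margin of~$p$ over each~$r$ need only exceed~$\kappa$. On the analysis side, the delicate part is the backward direction: ruling out that a cheaper set of additions, or a size-$\kappa$ set that fails to dominate~$R$, could turn~$q$ (or any candidate after~$p$) into a source; this is precisely the bookkeeping condensed in the ``$j<\kappa$'' and ``$j_r=0$'' conditions above.
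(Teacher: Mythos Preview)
Your proposal is correct and follows essentially the same route as the paper: the same candidate set $R\cup\{p,q\}$, the same unregistered votes $\succ_b$, the agenda with~$q$ last, and the key invariant that~$p$ always beats every $r\in R$ so that the only way to dethrone~$p$ is to turn~$q$ into a source, which happens exactly when the added votes correspond to a size-$\kappa$ dominating set. Your registered profile differs from the paper's by two extra votes ($\overrightarrow{R}\Succ q\Succ p$ and one additional copy of $p\Succ q\Succ\overrightarrow{R}$) that cancel pairwise on every relevant margin, so the head-to-head structure is literally identical; and where you pin~$p$ to the first slot and then relocate it with dummies, the paper simply lets~$\rhd$ be \emph{any} agenda with~$q$ last and observes that~$p$, beating all of~$R$, is a source at every non-last position, which is a slightly cleaner way to get the ``any non-last position'' clause without the extra gadget.
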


\begin{proof}
We prove the theorem via a reduction from {\prob{RBDS}}. Let $(G, \kappa)$ be an instance of {\prob{RBDS}}, where~$G$ is a bipartite graph with the bipartition~$(\rs,\bs)$. We construct an instance of {\prob{DCAV-TSMR}} as follows. Let $C = \rs \cup \{p, q\}$, and let~$\rhd$ be any agenda on~$C$ where~$q$ is last. 
We define a multiset~$V$ of $\kappa + 2$ registered votes as follows: 
\begin{itemize}
    \item $\kappa-1$ votes with the preference $p\Succ q\Succ \overrightarrow{\rs}$;
    \item two votes with the preference $p\Succ \overrightarrow{\rs}\Succ q$; and 
    \item one vote with the preference $q\Succ p\Succ \overrightarrow{\rs}$.
\end{itemize}
The unregistered votes are constructed based on~$\bs$. For each $b \in \bs$, we construct a vote~$\succ_b$ with the preference
\[\left(\overrightarrow{\rs}\setminus \neighbor{G}{b}\right) \Succ q \Succ p \Succ \left(\overrightarrow{\rs}[\neighbor{G}{b}]\right).\]
For a given~$\bs'\subseteq \bs$, let~$W(\bs')=\{\succ_b\, \setmid b\in \bs'\}$ be the multiset of unregistered votes corresponding to~$\bs'$.
Let~$W=W(\bs)$. 
Setting $k=\kappa$, the resulting instance is $g(I)=((C, V\cup W), p, \rhd, k)$. We prove the correctness of the reduction as follows.

$(\Rightarrow)$ Suppose that $I$ is a {\yesins} of {\prob{RBDS}}, i.e., there exists $\bs'\subseteq \bs$ of~$\kappa$ vertices that dominate~$\rs$ in~$G$. Then, in the election $(C, V\cup W(\bs'))$,~$q$ beats or ties with every other candidate, implying that~$q$ is the TSMR winner. Thus,~$g(I)$ is a {\yesins} of {\prob{DCAV-TSMR}}.

$(\Leftarrow)$ Suppose that~$g(I)$ is a {\yesins}, i.e., there exists $W'\subseteq W$ of at most~$k$ votes such that~$p$ is not the TSMR winner of $\elec=(C, V\cup W')$ with respect to~$\rhd$. Regardless of the choice of $W'$,~$p$ beats every candidate in~$\rs$, leaving $q$ as the only candidate that can prevent~$p$ from winning. Since~$q$ is last in the agenda, it wins exactly when it beats or ties with every other candidate. It follows that~$W'$ contain exactly~$\kappa$ votes, since otherwise~$p$ beats~$q$ in~$\elec$. Moreover, for each $r\in \rs$, at least one vote in~$W'$ ranks~$q$ before~$r$. By construction, an unregistered vote~$\succ_b$ ranks~$q$ before~$r$ if and only if~$b$ dominates~$r$ in~$G$. Hence, the set of vertices corresponding to~$W'$ dominates~$\rs$, and~$I$ is a {\yesins} of {\prob{RBDS}}.
\end{proof}

By slightly modifying the reductions used in the proofs of Theorems~\ref{thm-ccdv-hard} and~\ref{thm-ccdv-wbh-not-deleted-last}, we obtain the following parameterized hardness results for {\prob{DCDV-TSMR}}, respectively.

\begin{theorem}[$\bigstar$]
\label{thm-dcdv-hard-deleted-first}
{\prob{DCDV-TSMR}} is {\memph{\wbh}} with respect to the number of deleted votes, as long as the distinguished candidate is not last in the agenda.
\end{theorem}

\begin{theorem}[$\bigstar$]
\label{thm-dcdv-wbh-not-deleted-first}
{\prob{DCDV-TSMR}} is {\memph\wbh} with respect to the number of votes not deleted, as long as the distinguished candidate is not last in the agenda.
\end{theorem}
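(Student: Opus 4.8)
The plan is to proceed exactly as Theorem~\ref{thm-dcdv-hard-deleted-first} was obtained from Theorem~\ref{thm-ccdv-hard}: I would reuse, essentially verbatim, the reduction from \prob{RBDS} built in the proof of Theorem~\ref{thm-ccdv-wbh-not-deleted-last}, but now declare $q$ (rather than $p$) to be the distinguished candidate, and I would only impose on the agenda that $p$ occupy the last position while $q$ occupies the prescribed non-last position (all other candidates in arbitrary order). Recall that in that construction $C=R\cup\{q,p\}$, the profile consists of $\kappa-1$ votes $p\Succ q\Succ\overrightarrow{R}$ (block $V_1$), one vote $\overrightarrow{R}\Succ p\Succ q$ (block $V_2$), and one vote $\succ_b$ per blue vertex $b$ with $q$ on top and $p$ separating the non-neighbors of $b$ from the neighbors of $b$, and the budget is $k=\abs{B}-\kappa$. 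The budget is left unchanged, so the number of non-deleted votes equals $(\abs{B}+\kappa)-(\abs{B}-\kappa)=2\kappa$; as this depends only on $\kappa$ and \prob{RBDS} is {\wbh} {\wrt}~$\kappa$, the theorem follows once correctness is shown. (I may harmlessly assume $\kappa\geq 2$ and $\abs{B}>\kappa$, padding $B$ with a useless vertex if needed.)

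For correctness I would argue in two steps. Step one: after deleting any at most $k$ votes, the \memph{TSMR} winner is $q$, unless $p$ is a weak Condorcet winner of the surviving election, in which case it is $p$. Indeed, in the full profile $q$ beats every $r\in R$ by the margin $\kappa+\abs{B}-2$ (the only vote ranking any $r$ before $q$ is the one in $V_2$), and deleting at most $k=\abs{B}-\kappa$ votes shrinks this margin by at most $k$, leaving it at least $2\kappa-2>0$; hence $q$ still strictly beats every $r\in R$, so no predecessor of $q$ in the agenda beats $q$ (its predecessors are never $p$, since $p$ is last), and therefore $q$ always has no forward in-neighbor, whereas every $r$ placed after $q$ has one. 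Thus the right-most candidate with no forward in-neighbor is $q$ unless $p$ itself has none, i.e.\ unless $p$ is a weak Condorcet winner (Observation~\ref{obs-a}(2)). Step two: I would show that $p$ can be turned into a weak Condorcet winner by deleting at most $k$ votes if and only if $R$ admits a dominating set $\overline{B}\subseteq B$ with $\abs{\overline{B}}=\kappa$. The key numerical fact is that in the full profile the margin of $q$ over $p$ equals $\abs{B}-\kappa=k$ exactly, and the only votes ranking $q$ before $p$ are the $\succ_b$'s; so making $p$ tie-or-beat $q$ forces the deleted set to be precisely $\{\succ_b\setmid b\in B'\}$ for some $B'\subseteq B$ with $\abs{B'}=\abs{B}-\kappa$, the whole budget being consumed. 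Writing $\overline{B}=B\setminus B'$ (so $\abs{\overline{B}}=\kappa$), a routine count in the surviving profile $V_1\cup V_2\cup\{\succ_b\setmid b\in\overline{B}\}$ shows that $p$ ties-or-beats a given $r\in R$ exactly when $r$ has a neighbor in $\overline{B}$. Combining the two steps, $q$ fails to be the \memph{TSMR} winner for some admissible deletion iff $G$ has a size-$\kappa$ subset of $B$ dominating $R$.

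The step I expect to demand the most care is step two, and specifically the twin calibration underpinning the whole argument: the $q$-versus-$p$ margin is set to coincide with the deletion budget, so that repairing it exhausts the budget and rules out any ``spurious'' dethroning of $q$, while the $q$-versus-$R$ margins are made large enough to survive every admissible deletion, which is what confines the possible winners to $\{q,p\}$. Once these two observations are in place, the remaining obligations are the elementary tie/beat counts for $p$ against the vertices of $R$, which are already implicit in the proof of Theorem~\ref{thm-ccdv-wbh-not-deleted-last}.
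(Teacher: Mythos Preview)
Your proposal is correct and follows precisely the paper's own approach: the paper's proof simply states that the reduction is identical to the one in Theorem~\ref{thm-ccdv-wbh-not-deleted-last} except that~$q$ becomes the distinguished candidate, with correctness argued as in Theorem~\ref{thm-dcdv-hard-deleted-first}. Your two-step outline (first, $q$ still beats every $r\in R$ after any admissible deletion so the winner is confined to $\{q,p\}$; second, $p$ can be made a weak Condorcet winner iff the \prob{RBDS} instance is a \yesins) is exactly the intended argument, spelled out with a bit more care than the paper itself provides.
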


For destructive control by modifying candidates, polynomial-time solvability results hold true, irrespective of the position of the distinguished candidate in the agenda.

\begin{theorem}
\label{thm-dcac-p}
{\prob{DCAC-TSMR}}  can be solved in time $\bigo{n \cdot m}$, where~$m$ and~$n$ represent the number of candidates and the number of votes in the election, respectively.
\end{theorem}

\begin{proof}
Let $I = ((C \cup D, V), p, \rhd, k)$ be an instance of {\prob{DCAC-TSMR}} with $m=\abs{C\cup D}$ and $n=\abs{V}$. 
We assume $k \geq 1$ and that~$p$ is the TSMR winner of~$(C, V)$; otherwise,~$I$ can be solved trivially.  

Since~$p$ wins $(C, V)$, it is not beaten by any of its predecessors from~$C$ in the agenda~$\rhd$, and each successor~$c \in C \setminus \{p\}$ is beaten by at least one of its own predecessors. If~$p$ has a predecessor $c \in D$ such that~$c$ beats~$p$, we conclude that~$I$ is a {\yesins}, because~$p$ would not win $(C \cup \{c\}, V)$. This step can be implemented in time $\bigo{n\cdot \abs{D}}=\bigo{n\cdot m}$. 
Additionally, if~$p$ has a successor $c \in D$ in~$\rhd$ unbeaten by any of its predecessors in~$C$, then~$I$ is also a {\yesins}. This check also runs in time~$\bigo{n \cdot m}$.  
If neither of the conditions holds, then~$p$ remains the winner regardless of which subset of~$D$ is added, and we return ``\no''.
\end{proof}

The result for \prob{DCDC-TSMR} follows from Theorem~\ref{thm-ccdc-p}, as an instance is a \yesins{} if and only if there exists a candidate $c \in C \setminus \{p\}$ who can be made the TSMR winner via the deletion of at most $k$ candidates. Thus, solving \prob{DCDC-TSMR} reduces to solving at most $m - 1$ instances of \prob{CCDC-TSMR}.

\begin{corollary}
\label{cor-dcdc-p}
{\prob{DCDC-TSMR}} can be solved in time $\bigo{n \cdot m^{2.69} + n^{0.55} \cdot m^3}$, where $m$ is the number of candidates and $n$ the number of votes.
\end{corollary}

Note that when the distinguished candidate is last in the agenda, TSMR is immune to DCDC, which follows directly from the definition of TSMR.

\section{Possible and Necessary Winners}
\label{sec-possible-necessary}
This section studies the {\prob{Possible Winner}} and the {\prob{Necessary Winner}} problems. 
Bredereck~et~al.~\cite{DBLP:journals/jair/BredereckCNW17} showed that, except for {\prob{Necessary Winner-Successive}}, which is polynomial-time solvable, other cases for the amendment rule and the successive rule are computationally hard ({\nphns} for {\prob{Possible Winner}} and {\conphns} for {\prob{Necessary Winner}}). We prove TSMR matches successive rule complexity for these problems. 

We first present the result for \prob{Possible Winner}, then \prob{Necessary Winner}, and finally analyze two parameters measuring input incompleteness.

\subsection{The {\prob{Possible Winner}} Problem}
\label{sec-possible}

\begin{theorem}
\label{thm-possible-hard}
{\prob{Possible Winner-TSMR}} is {\memph{\nph}}, even when the agenda is complete and the distinguished candidate is first. 
\end{theorem}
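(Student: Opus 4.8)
The plan is to reduce from an NP-hard problem that naturally captures the combinatorial structure of choosing a completion. Since the agenda is to be complete and $p$ is the first candidate, by Observation~\ref{obs-a}(4) it would suffice to make $p$ the Condorcet winner in some completion; conversely, if $p$ is the first candidate but not the TSMR winner, some predecessor-free candidate other than $p$ wins, which (since $p$ has no predecessors) forces some candidate to beat $p$. So the target reduces essentially to: can the partial votes be completed so that $p$ beats every other candidate? This is the \prob{Possible Winner} question for the Condorcet winner, which is already known to be \nph{} (Konczak and Lang; Xia and Conitzer), but here I would give a self-contained reduction so that the instance genuinely exercises TSMR and the agenda constraints. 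A clean source is \prob{RBDS} (or 3SAT / \prob{Vertex Cover}), consistent with the rest of the paper.

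First I would set up candidates: $p$, a ``counter'' candidate, the red and blue vertices as candidates, plus a small number of padding candidates used only to fix pairwise margins. The idea is to use the registered (complete) votes to pin down all pairwise comparisons \emph{except} those between $p$ and the red-vertex candidates, leaving the $p$-versus-$r$ margins hanging by exactly one partial vote per blue vertex. Each blue vertex $b$ gets one partial vote whose only freedom is where to place $p$ relative to the red neighbors $N(b)$: a completion that puts $p$ above all of $N(b)$ corresponds to ``selecting'' $b$. I would balance the registered votes so that $p$ beats each $r$ iff at least one selected blue vote ranks $p$ above $r$, i.e.\ iff the selected blue set dominates $R$; and add a gadget forcing at most (equivalently exactly) $\kappa$ blue votes to be ``used'' — for instance by having $p$ tie the counter candidate in the registered profile except for a deficit of $\kappa$ that the blue votes must exactly cover, while overspending makes the counter (or a red candidate) beat $p$. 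Because the agenda is complete with $p$ first, Observation~\ref{obs-a}(4) and the remark above give: $p$ is the TSMR winner of the completion iff $p$ is a Condorcet winner there iff the chosen blue set is a size-$\kappa$ dominating set of $R$.

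The forward direction is routine: given a dominating set $B'$, complete each vote $\succ_b$ for $b \in B'$ to lift $p$, complete the rest arbitrarily subject to the pinned constraints, and check $p$ beats everyone. The backward direction needs the argument that any winning completion must ``spend'' its $p$-lifts only through the blue-vertex votes and on at most $\kappa$ of them — this is where I would use the counter-candidate gadget to rule out cheating, plus the structural fact (as in the proof of Theorem~\ref{thm-ccdv-hard-deleted-first}) that every candidate other than $p$ beats all its predecessors regardless of the completion, so that the only way $p$ fails to win is being beaten by some $r$, which happens exactly when $r$ is undominated. The main obstacle I anticipate is the bookkeeping to make all the non-$p$ pairwise margins robustly fixed by the registered votes while keeping the $p$-versus-$r$ margins delicately poised on a single partial vote each, and simultaneously enforcing the cardinality bound $\kappa$ — getting all these parities and counts to line up (and handling ties correctly, since the paper drops the no-tie assumption) is the fiddly part, but it is standard McGarvey-style padding and should go through.
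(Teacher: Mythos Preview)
Your approach has a fatal flaw. You arrange that every candidate other than $p$ beats all of its non-$p$ predecessors in every completion, and then argue that $p$ wins under TSMR if and only if $p$ is the Condorcet winner of the completion. That equivalence is correct, but it destroys the reduction: deciding whether a partial profile admits a completion in which a given candidate is the (weak) Condorcet winner is \emph{polynomial-time solvable}, not \nph{}. The paper even cites this explicitly after Theorem~\ref{thm-possible-hard-second-last} (Konczak and Lang). The greedy algorithm is simply to push $p$ as high as possible in every partial vote; since $p$'s height in a linear extension is a single position, raising $p$ helps it against every opponent simultaneously, so there is no tradeoff to exploit. Consequently no gadget of the form ``selecting $b$ lifts $p$ above $N(b)$ but hurts $p$ against the counter'' can be realised by a partial order on a single vote, and your cardinality enforcement (which you already noticed was the fiddly part, and which is stated backwards in the proposal) cannot work.

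The paper's reduction sidesteps exactly this obstacle by \emph{not} making $p$ the Condorcet winner. It introduces an intermediate candidate $q$ with agenda $(p,q,\overrightarrow{R})$ and arranges (robustly over all completions) that every $r\in R$ beats all its predecessors in $R\cup\{p\}$. Hence $p$ wins if and only if $q$ beats every $r\in R$ \emph{and} $p$ beats $q$. The partial vote for each blue vertex $b$ leaves free only the position of $q$ relative to $p$ and to $N(b)$: placing $q$ high (above $p$) helps $q$ beat the red neighbours of $b$ but hurts $p$ against $q$. This is a genuine tradeoff because it concerns $q$'s position, not $p$'s, so the greedy argument no longer applies; the cardinality bound on the dominating set then falls out of the requirement that $p$ still beat $q$. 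To repair your proof you need this two-step ``$p$ beats $q$, $q$ beats $R$'' structure (or something equivalent) rather than trying to make $p$ beat everyone directly.
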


\begin{proof}
We prove the theorem via a reduction from {\prob{RBDS}}. Let $(G, \kappa)$ be an instance of {\prob{RBDS}}, where~$G$ is a bipartite graph with the partition $(\rs, \bs)$. As in the proof of Theorem~\ref{thm-ccdv-hard-deleted-first}, we assume that~$G$ does not contain any isolated vertices, and all vertices in~$\rs$ have the same degree~$\ell$ where~$\ell\geq 1$.
We create an instance of {\prob{Possible Winner-TSMR}} as follows. 
Let $C=\rs\cup \{p, q\}$ and let  $\rhd=(p, q, \overrightarrow{\rs})$. 
We create five groups of votes, where only the first group contains partial votes:
\begin{itemize}
\item For each $b\in \bs$, one partial vote~$\succ_b$ with the preference
\[\left(\overleftarrow{\rs}[\neighbor{G}{b}]\right) \Succ p \Succ \left(\overleftarrow{\rs}\setminus \neighbor{G}{b}\right)\] and 
\[q\Succ \left(\overleftarrow{\rs}\setminus \neighbor{G}{b}\right);\]

\item A multiset~$V_1$ of~$\abs{\bs}$ complete votes with the preference $\overleftarrow{\rs} \Succ q\Succ p$;

\item A multiset~$V_2$ of $2\ell+\kappa$ complete votes with the preference $q\Succ \overleftarrow{\rs}\Succ p$;

\item A multiset~$V_3$ of $\ell+2\kappa+1$ complete votes with the preference $\overleftarrow{\rs}\Succ p\Succ q$;

\item A multiset~$V_4$ of $\ell+\kappa$ complete votes with the preference $p\Succ q\Succ \overleftarrow{\rs}$.
\end{itemize}
Let~$V({\bs})=\{\succ_b\, \setmid b\in \bs\}$ be the set of the~$\abs{\bs}$ partial votes in the first group. Let~$V$ be the multiset of the above $2\abs{\bs}+4\ell+4\kappa+1$ votes, and define $V({\overline{\bs}})=V\setminus V(\bs)$.

The instance of {\prob{Possible Winner-TSMR}} is $((C, V), p, \rhd)$, which can be constructed in polynomial time.
We now prove the correctness of the reduction.

$(\Rightarrow)$ Suppose that there is a subset $\bs'\subseteq \bs$ such that $\abs{\bs'}=\kappa$ and~$\bs'$ dominates~$\rs$ in~$G$. We complete each partial vote~$\succ_b$, where $b\in \bs$, as follows:
\begin{itemize}
    \item if $b\in \bs'$, complete it as
    $q\Succ \left(\overleftarrow{\rs}[\neighbor{G}{b}]\right) \Succ p \Succ \left(\overleftarrow{\rs}\setminus \neighbor{G}{b}\right)$,
    \item otherwise, complete it as
    $\left(\overleftarrow{\rs}[\neighbor{G}{b}]\right) \Succ p \Succ q \Succ \left(\overleftarrow{\rs}\setminus \neighbor{G}{b}\right)$.
\end{itemize}
With respect to this completion,~$p$ beats~$q$, and~$q$ beats all candidates in~$\rs$.
Then, by the agenda definition,~$p$ is the TSMR winner of the completion of~$(C, V)$.

$(\Leftarrow)$ Suppose that there is a completion~$V'$ of~$V(\bs)$ such that~$p$ wins $\elec=(C, V(\overline{\bs})\cup V')$ with respect to~$\rhd$. Let~$\bs'$ be the subset of~$\bs$ corresponding to votes in~$V'$ ranking~$p$ before~$q$, and let $\bs''=B\setminus \bs'$. 
In any completion of~$(C, V)$, every candidate in~$\rs$ beats all its predecessors in $\rs\cup \{p\}$. Since $p$ wins $\elec$, it follows that (1)~$q$ beats all candidates in~$\rs$, and 
(2)~$q$ is beaten by~$p$ in~$\elec$. 
As~$V(\overline{\bs})$ contains exactly $2\ell+3\kappa+1$ votes (from $V_3\cup V_4$) ranking~$p$ before~$q$, Condition~(2) implies that $\abs{\bs'}\geq \abs{\bs}-\kappa$, and thus $\abs{\bs''}\leq \kappa$. 

We prove that~$\bs''$ dominates~$\rs$ in~$G$. Suppose, for contradiction, that there exists $r\in \rs$ not dominated by any vertex in~$\bs''$. Equivalently, all the~$\ell$ neighbors of~$r$ in~$G$ are contained in~$\bs'$. 
Consequently, at least~$\ell$ votes in~$V'$ rank~$r$ before~$q$.
Combined with the votes in $V_1\cup V_3$, there are at least $\abs{\bs}+2\ell+2\kappa+1$ votes ranking~$r$ before~$q$, implying that~$r$ beats~$q$ in~$\elec$. 
This contradicts Condition~(1). Thus, $\bs''$ dominates $\rs$. 
Since~$\abs{\bs''}\leq \kappa$, it follows that $(G, \kappa)$ is a {\yesins} of {\prob{RBDS}}.
\end{proof}

Our reduction in the proof of Theorem~\ref{thm-possible-hard} differs from those in~\cite{DBLP:journals/jair/BredereckCNW17}, which show the {\nphns} of {\prob{Possible Winner-Amendment}} and {\prob{Possible Winner-Successive}} for complete agendas via reductions from {\prob{Independent Set}} and {\prob{Vertex Cover}}. 
Therein, the distinguished candidate~$p$ occupies the penultimate position of the agenda in the former and the third position in the latter. 
Our reduction can be adapted to prove the {\nphns} of {\prob{Possible Winner-TSMR}} when~$p$ occupies the $i$-th position for any constant $i$, by adding $i-1$ dummy candidates before~$p$ in the agenda and ranking them after all other candidates in all votes. 

In contrast, {\prob{Possible Winner-TSMR}} is polynomial-time solvable if the given agenda is complete and~$p$ appears last. This follows from Observation~\ref{obs-a} and the polynomial-time solvability of determining if a partial election can be completed such that a candidate becomes a (weak) Condorcet winner~\cite{Konczak05votingprocedures}.{\footnote{The result of~\cite{Konczak05votingprocedures} concerns Condorcet winners, but the algorithm also applies to weak Condorcet winners.}} 

\begin{corollary}
    \label{cor-possible-p-last}
{\prob{Possible Winner-TSMR}} can be solved in time $\bigo{n\cdot m}$ when the agenda is complete and the distinguished candidate appears last, where~$m$ and~$n$ represent the number of candidates and the number of votes in the election, respectively.
\end{corollary}

The algorithm in~\cite{Konczak05votingprocedures} can be trivially adapted to show the polynomial-time solvability of {\prob{Possible Winner-Amendment}} for complete agendas with $p$ among the top-$2$ positions.  
This highlights a radical complexity shift for the amendment rule when the distinguished candidate moves from the second to the third position in the agenda.
Our next result also reveals a complexity shift for TSMR when~$p$ moves up by just one position from the last.

\begin{theorem}[$\bigstar$]
\label{thm-possible-hard-second-last}
{\prob{Possible Winner-TSMR}} is {\memph{\nph}}, even for complete agendas with the distinguished candidate in the penultimate position. 
\end{theorem}

It remains open whether a similar complexity shift occurs for the successive rule. This entails determining the complexity of {\prob{Possible Winner-Successive}} when the agenda is complete and the distinguished candidate appears last.

\subsection{The {\prob{Necessary Winner}} Problem}

In contrast to the hardness of {\prob{Possible Winner-TSMR}}, we show that {\prob{Necessary Winner-TSMR}} is polynomial-time solvable. By definition, a candidate~$p$ is not a necessary winner of a partial election if and only if there exists a completion where~$p$ fails to win. We show that  the existence of such a completion can be efficiently checked.

\begin{theorem}
\label{thm-necessary-p}
{\prob{Necessary Winner-TSMR}} can be solved in time $\bigo{m^2\cdot n}$, where~$m$ and~$n$ denote the number of candidates and the number of votes in the election, respectively. 
\end{theorem}

\begin{proof}
Let $((C, V), p, \rhd)$ be an instance of {\prob{Necessary Winner-TSMR}}, with $m = \abs{C}$ and $n = \abs{V}$. 
We aim to determine whether a completion of $(C, V)$ and $\rhd$ exists in which~$p$ is not the TSMR winner. 
The distinguished candidate~$p$ fails to be the TSMR winner if and only if:
\begin{enumerate}
    \item[(1)] a predecessor of~$p$ beats~$p$, or
    \item[(2)] a successor of~$p$ is unbeaten by any of its predecessors.
\end{enumerate}
Our algorithm contains the following steps.
\begin{itemize}
    \item Step~1. \hfill
    
    We verify if Condition~(1) can be satisfied by at least one completion. 
    Let $B = \{c \in C \setminus \{p\} \setmid (p, c) \notin \rhd\}$ be the set of candidates that can appear as predecessors of~$p$ in some completion of~$\rhd$. This set can be identified in time~$\bigo{m}$. 
    If $B=\emptyset$, Condition~(1) cannot be satisfied, and we move to Step~2. 
Otherwise, for each $c\in B$, we apply a greedy strategy to determine whether $(C,V)$ admits a completion in which~$c$ beats~$p$. Let~$n'$ be the number of partial votes in~$V$ not ranking~$p$ before~$c$, which can be computed in time~$\bigo{n}$. 
If $n'>n/2$, then such a completion exists and the algorithm returns ``\no''. 
This step runs in time $\bigo{m}+\abs{B}\cdot \bigo{n}=\bigo{m \cdot n}$.

\item Step~2. \hfill 

Let $B' = \{c \in C \setminus \{p\} \setmid (c, p) \notin \rhd\}$ denote the set of candidates that could be successors of~$p$ in some completion of~$\rhd$.  This set can be computed in time~$\bigo{m}$.  
If $B'=\emptyset$, Condition~(2) cannot be satisfied by any completion, and the algorithm returns ``\yes''.  
Otherwise, for each $c \in B'$, we determine whether there exists a completion of $(C, V)$ and $\rhd$ that satisfies Condition~(2). Let $A_c = \{c' \in C \setmid (c', c) \in \rhd\}$ be the set of candidates that must precede~$c$. The algorithm returns ``{\no}'' if there exists a $c \in B'$ such that for every $c' \in A_c \cup \{p\}$, at most~$n/2$ partial votes in~$V$ rank~$c'$ before~$c$. In such a case, there exists a completion of~$\rhd$ where only candidates in $A_c \cup \{p\}$ are predecessors of $c$, and a completion of $(C, V)$ where~$c$ is unbeaten by any candidate in $A_c \cup \{p\}$. This completion satisfies Condition~(2), proving that $p$ is not a necessary winner.  
This check for a fixed $c$ takes $\bigo{m \cdot n}$ time. Iterating over all candidates in $B'$ yields a total running time of $(m-1) \cdot \bigo{m \cdot n} = \bigo{m^2 \cdot n}$.
\end{itemize}

If no candidate triggers a ``\no'' return in Steps~1 or~2, the algorithm returns ``\yes''.

Overall, the algorithm runs in time $\bigo{m^2 \cdot n}$.
\end{proof}

\subsection{Parameterizations of Vote Incompleteness}
\label{sec-remarks-incompleteness}
In the context of voting with partial information, the parameterized complexity of various measures of vote completeness has been extensively studied. A pair of candidates in a partial vote is undetermined if their relative preference is not specified. Two natural parameters are: 
\begin{enumerate}
    \item[(1)] the maximum number of undetermined pairs per vote, and 
    \item[(2)] the total number of undetermined pairs across all votes.
\end{enumerate} 

For many voting rules for which the {\prob{Possible Winner}} and {\prob{Necessary Winner}} problems are computationally hard, the hardness persists even when each partial vote contains only a constant number of undetermined pairs. However, the problems become fixed-parameter tractable ({\fpt}) when parameterized by the total number of undetermined pairs~\cite{DBLP:conf/ijcai/BetzlerHN09,DBLP:journals/jair/XiaC11}.

Our reduction in the proof of Theorem~\ref{thm-possible-hard} shows that {\prob{Possible Winner-TSMR}} remains {\nph} even for complete agendas where the distinguished candidate occupies the first position and each vote contains at most four undetermined pairs. This follows from the fact that {\prob{RBDS}} remains {\nph} even when each blue vertex has at most three neighbors~\cite{garey}. In such restricted instances, for each partial vote~$\succ_b$ introduced for a blue vertex $b \in \bs$, the pair $\{p, q\}$ and each $\{q, r\}$, where~$r$ is a neighbor of~$b$ in~$G$, are undetermined.

Bredereck~et~al.~\cite{DBLP:journals/jair/BredereckCNW17} implicitly showed that {\prob{Possible Winner-Successive}} remains {\nph}, even when the agenda is complete, the distinguished candidate appears in the penultimate position, and each vote contains at most four undetermined pairs. Additionally, {\prob{Possible Winner-Amendment}} is {\nph} even when the agenda is complete, the distinguished candidate appears third in the agenda, and each vote contains at most eight undetermined pairs. The {\conphns} of {\prob{Necessary Winner-Amendment}} also holds under the restrictions that the agenda is complete, the distinguished candidate appears last in the agenda, and each vote contains at most eight undetermined pairs.

With respect to the total number of undetermined pairs (denoted $\kappa$), a branching-based {\fpt} algorithm exists~\cite{DBLP:conf/ijcai/BetzlerHN09}: for each undetermined pair, the algorithm branches on the two possible preference orders. Its running time is $\bigos{2^\kappa}$, where $\bigos{\cdot}$ hides polynomial factors. Betzler, Hemmann, and Niedermeier~\cite{DBLP:conf/ijcai/BetzlerHN09} further proposed more efficient {\fpt} algorithms, as well as polynomial-time algorithms for special cases in which the undetermined pairs follow specific combinatorial patterns. Although these refinements lie beyond the scope of the present work, they provide a promising avenue for future research. 

\section{Other Algorithmic Lower Bounds}
\label{sec-lowerbounds}
While our principal aim in this paper is to delineate the parameterized complexity of strategic problems under TSMR, the reductions outlined in the preceding section carry additional implications for kernelization, approximability, and exact algorithmic lower bounds. In this section, we briefly discuss these ramifications.   
We use the following notations:
\begin{itemize}
\item $m_{\text{r}}$: number of registered candidates, i.e., $m_{\text{r}}=\abs{C}$.
\item $m_{\text{u}}$: number of unregistered candidates, i.e., $m_{\text{u}}=\abs{D}$.
\item $m$: total number of candidates.
\item $n_{\text{r}}$: number of registered votes, i.e., $n_{\text{r}}=\abs{V}$.
\item $n_{\text{u}}$: number of unregistered votes, i.e., $n_{\text{u}}=\abs{W}$.
\item $n$: total number of votes.
\item $k$: number of votes/candidates allowed to be added/deleted.
\end{itemize}

The algorithmic lower bounds in this section rely on standard complexity-theoretic assumptions. Readers unfamiliar with these assumptions may safely skip the technical details, noting that they are widely accepted in the literature.

\subsection{Lower Bounds for Kernelization}
\begin{definition}[Kernelization]
Let~$P$ be a parameterized problem. A kernelization algorithm (kernelization) for~$P$ is an algorithm which takes as input an instance $(X, \kappa)$ of~$P$ and outputs another instance $(X',\kappa')$ of~$P$ such that the following hold: 
\begin{itemize}
\item The algorithm runs in polynomial time. 
\item $(X, \kappa)$ is a {\yesins} of~$P$ if and only if $(X', \kappa')$ is a {\yesins} of~$P$.
\item $\abs{X'}\leq f(\kappa)$ for some computable function~$f$ of~$\kappa$.
\item $\kappa'\leq g(\kappa)$ for some computable function~$g$ of $\kappa$.
\end{itemize}
\end{definition}
The instance $(X', \kappa')$ returned by a kernelization algorithm is called a kernel of~$P$.  
If~$f(\kappa)$ is a polynomial function, we call $(X', \kappa')$ a polynomial kernel, and say that the problem~$P$ admits a polynomial kernel.

It is known that {\prob{RBDS}} does not admit any polynomial kernels with respect to either~$\abs{\rs}+\kappa$ or~$\abs{\bs}$, assuming {$\textsf{PH}\neq \Sigma_{\textsf{P}}^3$} (see, e.g.,~\cite{Cygan2015,DBLP:journals/talg/DomLS14}). Since our reductions are polynomial parameter transformations, they also imply the absence of polynomial kernels for many related problems. 

\begin{definition}[Polynomial Parameter Transformation]
Let~$P$ and~$Q$ be two parameterized problems. A polynomial parameter transformation from~$P$ to~$Q$ is an algorithm that, given an instance $(X,\kappa)$ of~$P$, outputs an instance $(X', \kappa')$ of~$Q$ such that:
\begin{itemize}
\item The algorithm runs in polynomial time.
\item $(X, \kappa)$ is a {\yesins} of~$P$ if and only if $(X', \kappa')$ is a {\yesins} of $Q$.
\item $\kappa'\leq g(\kappa)$ for some polynomial function~$g$ of~$\kappa$.
\end{itemize}
\end{definition}

\begin{lemma}[\cite{DBLP:conf/esa/BodlaenderTY09,Cygan2015,DBLP:journals/talg/DomLS14}]
Let~$P$ and~$Q$ be two parameterized problems such that the unparameterized version of~$P$ is {\memph\npc} and the unparameterized version of~$Q$ is in {\memph\np}. 
If there is a polynomial parameter transformation from~$P$ to~$Q$, then~$Q$ admits a polynomial kernel only if~$P$ does.
\end{lemma}

With these notions, and recalling that 
(1) the unparameterized version of {\prob{RBDS}} is {\npc}, and 
(2) {\prob{RBDS}} does not admit any polynomial kernels with respect to~$\abs{\rs}+\kappa$ or~$\abs{\bs}$ assuming {$\textsf{PH}\neq \Sigma_{\textsf{P}}^3$}, 
 our reductions imply the nonexistence of polynomial kernels for many problems, as summarized in Table~\ref{tab-kernelization-lower-bounds}.

\begin{table}
\caption{
Nonexistence of polynomial kernels for various voting problems for TSMR, assuming {$\textsf{PH} \neq \Sigma_{\textsf{P}}^3$}. 
Each listed parameter indicates that the corresponding problem does not admit a polynomial kernel with respect to that parameter. 
``No {\poly}-kernel'' stands for ``no polynomial kernel''. Theorems in brackets next to the parameters demonstrate that the reductions used constitute  polynomial parameter transformations, which imply the corresponding kernelization lower bounds.
}
\centering
{
\begin{tabular}{lllll} \toprule
&{\prob{CCAV}, \prob{DCAV}}
&{\prob{CCDV}, \prob{DCDV}}
&{\prob{CCAC}}
&\prob{Possible Winner}
\\ \midrule

No {\poly}-kernel &
$m+k$, $n_{\text{u}}$ (Thms.~\ref{thm-ccav-hard-last},~\ref{thm-dcav-hard-first})
&$m+k$, $n$ (Thms.~\ref{thm-ccdv-hard-deleted-first},~\ref{thm-dcdv-hard-deleted-first})
&$m_{\text{u}}+k$, $m_{\text{r}}$ (Thm.~\ref{thm-ccac-hard})

&$m$, $n$ (Thm.~\ref{thm-possible-hard}) \\ \bottomrule
\end{tabular}
}
\label{tab-kernelization-lower-bounds}
\end{table}

\subsection{Lower Bounds for Exact Algorithms}
Assuming the Strong Exponential Time Hypothesis (SETH), {\prob{RBDS}} cannot be solved in time $\bigos{(2-\epsilon)^{\abs{\bs}}}$ for any $\epsilon>0$~\cite{Cygan2015}. Moreover, unless~ETH fails, {\prob{RBDS}} cannot be solved in time $f(\kappa) \cdot \abs{\bs}^{\smallo{\kappa}}$ for any computable function~$f$ in~$\kappa$~\cite{DBLP:journals/jcss/ChenHKX06}. Combined with our reductions, these lower bounds suggest that brute-force algorithms for several election control problems are essentially optimal. Table~\ref{tab-algorithm-lower-bounds} summarizes the results.

\begin{table}[ht!]
\renewcommand{\tabcolsep}{2.5mm}
\caption{Lower bounds for exact algorithms for election control problems under TSMR. The theorems in brackets indicate the sources whose reductions imply the corresponding lower bounds.
}
\centering
{
\begin{tabular}{p{0.45\textwidth} p{0.45\textwidth}} \toprule
{\prob{CCAV}, \prob{DCAV}}
&{\prob{CCAC}}
\\ \midrule

no $\bigos{(2-\epsilon)^{n_{\text{u}}}}$-time algorithm for any $\epsilon >0$ unless SETH fails (Thms.~\ref{thm-ccav-hard-last},~\ref{thm-dcav-hard-first})
& no $\bigos{(2-\epsilon)^{m_{\text{u}}}}$-time algorithm  for any $\epsilon >0$ unless SETH fails (Thm.~\ref{thm-ccac-hard})
\\ \midrule

no $f(k)\cdot {n_{\text{u}}^{k}}$-time algorithm for any computable function $f$ unless ETH fails (Thms.~\ref{thm-ccav-hard-last},~\ref{thm-dcav-hard-first})
&no  $f(k)\cdot {m_{\text{u}}^{k}}$-time algorithm for any computable function $f$ unless ETH fails  (Thm.~\ref{thm-ccac-hard})
\\ 
\bottomrule
\end{tabular}
}
\label{tab-algorithm-lower-bounds}
\end{table}

\subsection{Lower Bounds for Approximation Algorithms}
\begin{table}[ht!]
\caption{Lower bounds on polynomial-time approximation algorithms for election control problems for TSMR. Theorems in brackets indicate sources whose reductions imply the corresponding lower bounds.
}
\begin{center}
{
\begin{tabular}{p{0.3\textwidth}p{0.3\textwidth}p{0.3\textwidth}} \toprule
{\prob{CCAV}, \prob{DCAV}}
&{\prob{CCDV}, \prob{DCDV}}
&{\prob{CCAC}}
\\ \midrule

no $(1-\epsilon)\ln m$-approximation for any $\epsilon>0$ unless $\poly= \np$ (Thms.~\ref{thm-ccav-hard-last},~\ref{thm-dcav-hard-first})
& no $(1-\epsilon)\ln m$-approximation for any $\epsilon>0$ unless $\poly= \np$   (Thms.~\ref{thm-ccdv-hard-deleted-first},~\ref{thm-dcdv-hard-deleted-first})
& no $(1-\epsilon)\ln m_{\text{r}}$-approximation for any $\epsilon>0$ unless $\poly= \np$  (Thm.~\ref{thm-ccac-hard})
\\ \midrule
\end{tabular}
}
\end{center}
\label{tab-poly-inapp}
\end{table}

For each control problem, we consider its optimal version, in which the goal is to add or delete the minimum number of votes or candidates to make the distinguished candidate a winner (constructive) or prevent them from winning (destructive).

Let {\prob{Optimal RBDS}} be the optimal version of {\prob{RBDS}}, where the objective is to select a minimum number of blue vertices that dominate all red vertices. 
It is well known that, for any $\epsilon>0$, unless $\np\subseteq \dtime(n^{\bigo{\log\log n}})$, {\prob{Optimal RBDS}} cannot be approximated in polynomial time within a factor of $(1-\epsilon)\ln \abs{\rs}$~\cite{DBLP:journals/jacm/Feige98}. This result was recently strengthened by Dinur and Steurer~\cite{DBLP:conf/stoc/DinurS14}, who showed that the same lower bound holds even under the weaker assumption that $\poly \neq \np$. 

Building on these results and our earlier reductions, we derive several inapproximability results for the optimal versions of the control problems, as summarized in Table~\ref{tab-poly-inapp}.

\section{Conclusion}
\label{sec-con}
We have investigated the (parameterized) complexity of several natural voting problems under the recently proposed TSMR rule, considering a variety of problem-specific parameters. Our analysis provides a comprehensive map of the complexity landscape, including polynomial-time algorithms, {\nphns} results, and {\wbhns} outcomes. Notably, our hardness results often hold even when the distinguished candidate is positioned at the very beginning or the very end of the agenda, offering a thorough characterization of the computational challenges associated with strategic behavior under TSMR (see Table~\ref{tab-results}).

These findings allow for a meaningful comparison between TSMR and the two well-studied sequential rules, namely the amendment rule and the successive rule. In particular, TSMR resists most control problems, while remaining vulnerable to  {\prob{Agenda Control}} and {\prob{Coalition Manipulation}}. Moreover,  {\prob{Possible Winner-TSMR}} is {\nph}, whereas {\prob{Necessary Winner-TSMR}} is solvable in polynomial time. Taken together, these results suggest that TSMR performs comparably to the amendment rule and the successive rule in terms of resilience to strategic behavior and the computational complexity of determining winners under uncertainty.

Despite its promising theoretical properties, it is important to recognize that applying TSMR in practical settings of sequential decision-making---such as parliamentary procedures---may present logistical challenges. One key reason behind the successful deployment of the amendment rule and the successive rule in real-world parliamentary decision-making is their simplicity. Both rules can be implemented in a linear number of iterations, with one or more candidates potentially eliminated in each round without requiring further consideration. This structure reduces the cognitive burden on voters. Moreover, in each iteration, a voter needs only to compare the current candidate with some of its successors. In particular, the amendment rule requires each voter to compare only two candidates per iteration. Under the successive rule, a voter rejects the current candidate if and only if they prefer any of the remaining candidates in the agenda over the one under consideration. 
In contrast, applying TSMR imposes a significantly greater cognitive load. It remains unclear whether TSMR can be implemented with the same linear efficiency as the amendment and successive rules, where each iteration involves only simple, localized comparisons.

Nevertheless, rather than limiting our perspective, we should remain open to the potential of newly proposed voting rules, as they may prove valuable in a wide range of practical settings---both now and in the future---even beyond their originally intended scope. In particular, as artificial intelligence continues to evolve, new domains are emerging where robust and expressive voting rules like TSMR could play a significant role.

A key limitation of our current work is its theoretical focus. Future studies are needed to evaluate the practical performance and applicability of TSMR in real-world scenarios, possibly through empirical studies, simulations, or user-centered experiments.

As for other promising directions for future research, one could investigate how domain restrictions on voters' preferences influence the computational complexity of strategic problems under TSMR. For an overview of commonly studied domain restrictions and their implications, we refer the reader to the survey by Karpov~\cite{DBLP:journals/aarc/Karpov22}, and the survey by Elkind, Lackner, and Peters~\cite{DBLP:journals/corr/abs-2205-09092}. Additionally, resolving the open questions discussed in Sections~\ref{sec-possible} and~\ref{sec-remarks-incompleteness} remains an important avenue for future work.


\section*{Acknowledgments}
The author sincerely thanks the anonymous reviewers of AAMAS 2023 for their valuable feedback on an earlier version of this paper. 
The author also extends heartfelt gratitude to the anonymous reviewers of Autonomous Agents and Multi-Agent Systems for their insightful and constructive comments, which significantly enhanced the quality of this work.

%
%




\section*{Appendix: Proofs Omitted from the Main Body}

\noindent{\bf{Theorem}~\ref{thm-ccdv-hard-first}.}
{\it{{\prob{CCDV-TSMR}} is {\memph\wbh} with respect to the number of votes not deleted, even when the distinguished candidate is first in the agenda.}}
\smallskip

\begin{proof}
We prove the theorem via a reduction from {\prob{RBDS}}. Let $(G, \kappa)$ be an instance of {\prob{RBDS}}, where~$G$ is a bipartite graph with the bipartition~$(R, B)$.  As in the proof of Theorem~\ref{thm-ccdv-hard-deleted-first}, we assume that every red vertex has degree exactly~$\ell$ for some positive integer~$\ell$. 
We construct an instance of {\prob{CCDV-TSMR}} as follows. The candidate set is $C=\rs\cup \{p, q\}$, and the agenda is $\rhd=(p,\overrightarrow{\rs},q)$. We create three groups of votes:
\begin{itemize}
    \item  a multiset~$V_1$ of~$\kappa$ votes with the preference $p\Succ q\Succ \overleftarrow{\rs}$;
    \item a singleton~$V_2$ of one vote with the preference $\overleftarrow{\rs}\Succ p\Succ q$; and
    \item  for every blue vertex $b\in \bs$, one vote $\succ_b$ with the preference
    $q\Succ \left(\overleftarrow{\rs}\setminus \neighbor{G}{b}\right) \Succ p \Succ \left(\overleftarrow{\rs}[\neighbor{G}{b}]\right)$.
\end{itemize}
Let~$V$ denote the multiset of the above $\abs{\bs}+\kappa+1$ votes. For a given $\bs'\subseteq \bs$, we use $V(\bs')=\{\succ_b\; \setmid b\in \bs'\}$ to denote the multiset of votes corresponding to~$\bs'$.
We complete the construction by setting $k=\abs{\bs}-\kappa$. The instance of {\prob{CCDV-TSMR}} is $((C, V), p, \rhd, k)$, which can be constructed in polynomial time.
It remains to show the correctness of the reduction.

$(\Rightarrow)$ Assume that there exists $\bs'\subseteq \bs$ such that $\abs{\bs'}=\kappa$ and~$\bs'$ dominates~$\rs$ in~$G$. Let $\elec=(C, V_1\cup V_2\cup V(\bs'))$ be the election obtained from $(C, V)$ by deleting the $k=\abs{\bs}-\kappa$ votes corresponding to the blue vertices in $\bs\setminus \bs'$. We show that~$p$ is the TSMR winner of~$\elec$ with respect to the agenda~$\rhd$. It suffices to show that~$p$ beats every other candidate in~$\elec$. Let $r\in \rs$. Since~$\bs'$ dominates~$\rs$, there exists some $b\in \bs'$ that dominates~$r$; hence~$\succ_b$ ranks~$p$ before~$r$. Therefore, at least $\abs{V_1}+1=\kappa+1$ votes in~$\elec$ rank~$p$ before~$r$. Moreover, $\abs{V_1}+\abs{V_2}=\kappa+1$ votes in~$\elec$ rank~$p$ before~$q$. Since $\abs{V_1\cup V_2\cup V(\bs')}=2\kappa+1$, candidate~$p$ beats every other candidate in~$\elec$. Therefore,~$p$ is the TSMR winner of~$\elec$.

$(\Leftarrow)$ Assume that there exists $V'\subseteq V$ such that $\abs{V'}\leq k=\abs{\bs}-\kappa$ and~$p$ is the TSMR winner of the election $\elec=(C, V\setminus V')$. Observe first that  $V'\subseteq V(\bs)$ and $\abs{V'}=k$; otherwise~$q$ is not beaten by any of her predecessors and thus~$q$ wins~$\elec$, a contradiction. Let $\bs'\subseteq \bs$ be such that $\abs{\bs'}=k=\abs{\bs}-\kappa$ and $V(\bs')=V'$. Let $\overline{\bs}=B\setminus \bs'$. Obviously, $\abs{\overline{\bs}}=\kappa$ and $\abs{V\setminus V'}=2\kappa+1$. By construction, regardless of which~$k$ votes are contained in~$V(\bs')$, every candidate from $C\setminus \{p\}$ beats all her predecessors in $C\setminus \{p\}$. Since~$p$ appears first in the agenda and wins~$\elec$, it follows that~$p$ beats every other candidate in~$\elec$. We claim that~$\overline{\bs}$ dominates~$\rs$. Suppose, for contradiction, that there exists a red vertex $r\in \rs$ not dominated by any vertex from~$\overline{\bs}$. Then, by construction, all votes in~$V(\overline{\bs})$ rank~$r$ before~$p$. As a consequence, there are $\abs{\overline{\bs}}+\abs{V_2}=\kappa+1$ votes ranking~$r$ before~$p$ in~$\elec$, contradicting that~$p$ beats~$r$ in~$\elec$.
\end{proof}

\bigskip

\noindent{\bf{Theorem}~\ref{thm-ccdv-hard}.} 
{\it{{\prob{CCDV-TSMR}} is {\emph{\wbh}} with respect to the number of deleted votes, even when the distinguished candidate is last in the agenda.}}
\smallskip

\begin{proof}
We prove the theorem by a reduction from {\prob{RBDS}}. 
Let $I=(G, \kappa)$ be an instance of {\prob{RBDS}}, where~$G$ is a bipartite graph with the bipartition~$(\rs, \bs)$. 
As in the proof of Theorem~\ref{thm-ccdv-hard-deleted-first}, we assume that~$G$ contains no isolated vertices, $\kappa \geq 4$, and every red vertex has degree~$\ell \geq 1$. 
Let $C = \rs \cup \{p, q\}$, and let~$\rhd$ be an agenda on~$C$ in which~$p$ appears last; the relative order of the remaining candidates is immaterial. 
We create a multiset~$V$ of $2\abs{\bs}+2\ell+\kappa$ votes:
\begin{itemize}
\item $\abs{\bs}+1$ votes with the preference $\overrightarrow{\rs}\Succ p\Succ q$;
\item $\ell+\kappa$ votes with the preference $q\Succ p\Succ \overrightarrow{\rs}$;
\item $\ell-1$ votes with the preference $p\Succ q\Succ \overrightarrow{\rs}$; and 
\item for each blue vertex $b\in \bs$, one vote~$\succ_b$ with the preference
$q\Succ \left(\overrightarrow{\rs}[\neighbor{G}{b}]\right) \Succ p \Succ \left(\overrightarrow{\rs}\setminus \neighbor{G}{b}\right)$.
\end{itemize}
For a given $\bs' \subseteq \bs$, let $V(\bs') = \{\succ_b\, \setmid b \in \bs'\}$. Since every red vertex has degree~$\ell$, for each $r \in \rs$ exactly $\abs{\bs} - \ell$ votes in~$V(\bs)$ rank~$p$ before~$r$.

Let $k=\kappa$. The instance of {\prob{CCDV-TSMR}} is $g(I)=((C,V), p, \rhd, k)$. 
We prove the correctness of the reduction.

$(\Rightarrow)$ Assume that there exists $\bs'\subseteq \bs$ of cardinality~$\kappa$ that dominates~$\rs$ in~$G$. Let $\elec=(C, V\setminus V(\bs'))$. Clearly, $\abs{V\setminus V(\bs')}=2\abs{\bs}+2\ell$. To show that $g(I)$ is a {\yesins}, it suffices to show that~$p$ is not beaten by any other candidate in~$\elec$. 

Since all votes in~$V(\bs')$ rank~$q$ before~$p$, the set $V\setminus V(\bs')$ contains $(\ell+\kappa)+(\abs{\bs}-\kappa)=\abs{\bs}+\ell$ votes ranking~$q$ before~$p$, and thus~$p$ ties with~$q$ in~$\elec$. Next, consider candidates in~$\rs$. Let $r\in \rs$. Since~$\bs'$ dominates~$\rs$, there exists a vertex $b\in \bs'$ that dominates~$r$. By construction,~$r$ is ranked before~$p$ in the vote $\succ_b\, \in V(\bs')$. Hence, at most $\abs{V(\bs')} - 1 = \kappa - 1$ votes in~$V(\bs')$ rank~$p$ before~$r$. 
Consequently, the number of votes in~$V\setminus V(\bs')$ ranking~$p$ before~$r$ is at least 
$(\ell+\kappa)+(\ell-1)+(\abs{\bs}-\ell)-(\kappa-1)=\abs{\bs}+\ell$, implying that~$p$ beats or ties with~$r$ in~$\elec$. 

$(\Leftarrow)$ Assume there exists $V'\subseteq V$ such that $\abs{V'}\leq k$ and~$p$ is the TSMR winner of $\elec=(C, V\setminus V')$ with respect to~$\rhd$. Since~$p$ is last in~$\rhd$, it follows that~$p$ beats or ties with every other candidate in~$\elec$. Consequently, all votes in~$V'$ rank~$q$ before~$p$, and necessarily $\abs{V'}=k=\kappa$; otherwise~$p$ would be beaten by~$q$ in~$\elec$. 
There are two groups of votes ranking~$q$ before~$p$: those corresponding to the blue vertices, and those with preference $q\Succ p\Succ \overrightarrow{\rs}$. We may assume that all votes in~$V'$ are from~$V(\bs)$. Indeed, if~$V'$ contained some vote with preference $q\Succ p\Succ \overrightarrow{\rs}$, we can obtain another feasible solution by replacing this vote with any vote in~$V(\bs)\setminus V'$. 
 
Let $r\in \rs$. Since there are $(\abs{\bs}+1)+\ell$ votes in~$V$ ranking~$r$ before~$p$, and $\abs{V\setminus V'}=2\abs{\bs}+2\ell$, there is at least one vote $\succ_b\,\in V'$ that ranks~$r$ before~$p$. By construction, the vertex~$b$ corresponding to~$\succ_b$ dominates~$r$. Hence, $\{b\in \bs \setmid\,  \succ_b\in V'\}$ dominates~$\rs$, implying that~$I$ is a {\yesins} of {\prob{RBDS}}.
\end{proof}
\bigskip

\noindent{\bf{Theorem}~\ref{thm-ccdv-wbh-not-deleted-last}}.
{\it{{\prob{CCDV-TSMR}} is {\memph\wbh} with respect to the number of votes not deleted, even when the distinguished candidate is last in the agenda.}}
\smallskip

\begin{proof}
We prove the theorem by a reduction from {\prob{RBDS}}. Let $(G, \kappa)$ be an instance of {\prob{RBDS}}, where~$G$ is a bipartite graph with the bipartition $(\rs, \bs)$. Without loss of generality, assume that $\abs{\bs}\geq \kappa\geq 2$. 
We create an instance of {\prob{CCDV-TSMR}} as follows. 
Let $C=\rs\cup \{q, p\}$. Let~$\rhd$ be an agenda on~$C$ in which~$p$ appears last. 
We create the following votes:
\begin{itemize}
\item a multiset~$V_1$ of $\kappa-1$ votes with the preference $p\Succ q\Succ \overrightarrow{\rs}$;
\item a singleton~$V_2$ of one vote with the preference $\overrightarrow{\rs}\Succ p\Succ q$; and
\item for each blue vertex $b\in \bs$, one vote $\succ_b$ with the preference
$q\Succ \left(\overrightarrow{\rs}\setminus \neighbor{G}{b}\right) \Succ p\Succ \left(\overrightarrow{\rs}[\neighbor{G}{b}]\right)$.
\end{itemize}
Since $\kappa\geq 2$,~$V_1$ is nonempty. For a given $\bs'\subseteq \bs$, let $V(\bs')=\{\succ_b\, \setmid b\in \bs'\}$. Let $V=V_1\cup V_2\cup V(\bs)$. Clearly, $\abs{V}=\abs{\bs}+\kappa$. Finally, let $k=\abs{\bs}-\kappa$. The instance of {\prob{CCDV-TSMR}} is $((C, V), p, \rhd, k)$. We prove the correctness as follows.

$(\Rightarrow)$ Assume that there exists $\bs'\subseteq \bs$ such that~$\abs{\bs'}=\kappa$ and~$\bs'$ dominates~$\rs$ in~$G$. Let $V'=V_1\cup  V_2\cup V(\bs')$, and let $\elec=(C, V')$. Clearly, $\abs{V'}=2\kappa$. We claim that~$p$ is the TSMR winner of~$\elec$ with respect to the agenda~$\rhd$. As~$p$ appears last in~$\rhd$, it suffices to show that~$p$ is unbeaten by any other candidate in~$\elec$. It is clear that~$p$ ties with~$q$ in~$\elec$. Let $r\in \rs$ be a red vertex. As~$\bs'$ dominates~$\rs$, there exists $b\in \bs'$ dominating~$r$. By construction,~$p$ is ranked before~$r$ in the vote~$\succ_b$. Therefore, there are at least $\abs{V_1}+1=\kappa$ votes ranking~$p$ before~$r$ in~$V'$, implying that~$p$ is not beaten by~$r$. As this holds for all $r\in \rs$, the forward direction follows.

$(\Leftarrow)$ Assume that there exists $V'\subseteq V$ such that $\abs{V'}\geq 2\kappa$ and~$p$ is the TSMR winner of $(C, V')$ with respect to~$\rhd$. Since $\abs{V_1}+\abs{V_2}=\kappa$ and all votes in~$V(\bs)$ rank~$q$ first, it follows that $(V_1\cup V_2)\subseteq V'$. Moreover,~$V'$ contains exactly~$\kappa$ votes from~$V(\bs)$; otherwise~$q$ would be the winner of $(C,V')$, contradicting that $p$ wins. Let $V(\bs')=V'\cap V(\bs)$, where $\bs'\subseteq \bs$. We have $\abs{V({\bs')}}=\kappa$. We claim that~$\bs'$ dominates~$\rs$ in~$G$. Suppose, for contradiction, that there exists $r\in \rs$ not dominated by any vertex in~$\bs'$. By construction,~$r$ is ranked before~$p$ in all the~$\kappa$ votes of~$V({\bs'})$. Together with the vote in~$V_2$, there are $\kappa+1$ votes in~$V'$ ranking~$r$ before~$p$, implying that~$r$ beats~$p$. However, in this case,~$p$ cannot be the winner of $(C, V')$, a contradiction. Since $\abs{\bs'}=\kappa$, $(G, \kappa)$ is a {\yesins} of {\prob{RBDS}}.
\end{proof}

\noindent{\bf{Theorem}~\ref{thm-dcdv-hard-deleted-first}.} 
{\it{\prob{DCDV-TSMR} is {\memph{\wbh}} with respect to the number of deleted votes, as long as the distinguished candidate is not last in the agenda.}}
\medskip

\begin{proof}
The reduction is identical to the one used in the proof of Theorem~\ref{thm-ccdv-hard}, except that~$q$ is now the distinguished candidate and we assume $\ell\ge 2$, which does not affect the {\wbhns} of {\prob{RBDS}}. The correctness of the reduction relies on the fact that, regardless of which at most~$k$ votes are deleted,~$q$ beats all candidates in~$\rs$, assuming $\ell\geq 2$. This leaves~$p$ as the sole candidate preventing~$q$ from winning. Moreover, this remains true as long as~$q$ is not last in the agenda. 
\end{proof}


\bigskip

\noindent{{\bf{Theorem}}~\ref{thm-dcdv-wbh-not-deleted-first}}. 
{\it{{\prob{DCDV-TSMR}} is {\memph\wbh} with respect to the number of votes not deleted, as long as the distinguished candidate is not last in the agenda.}}
\smallskip

\begin{proof}
The reduction is identical to the one used in the proof of Theorem~\ref{thm-ccdv-wbh-not-deleted-last}, except that $q$ is the distinguished candidate. 
\end{proof}

\bigskip

\noindent{{\bf{Theorem}}~\ref{thm-possible-hard-second-last}}. 
{\it{{\prob{Possible Winner-TSMR}} is {\memph{\nph}}, even for complete agendas with the distinguished candidate in the penultimate position.}}
\smallskip

\begin{proof}
We prove the theorem via a reduction from {\prob{RBDS}}. Let $I=(G, \kappa)$ be an instance of {\prob{RBDS}}, where~$G$ is a bipartite graph with the bipartition~$(\rs, \bs)$ and $1\leq \kappa\leq \abs{\bs}$. As in the proof of Theorem~\ref{thm-ccdv-hard-deleted-first}, we assume that every red vertex in~$R$ has degree $\ell\geq 1$.  
We construct an instance of {\prob{Possible Winner-TSMR}} as follows. Let $C=\rs\cup \{p, q, q'\}$ and let
$\rhd=(q', \overrightarrow{\rs}, p, q)$. We create five groups of votes, where only the first group contains partial votes:
\begin{itemize}
    \item for every $b\in \bs$, one partial vote~$\succ_b$ with the preference
    \[\left(\overrightarrow{\rs}\setminus \neighbor{G}{b}\right) \Succ q'\]
    and
    \[q\Succ p\Succ \left(\overrightarrow{\rs}[\neighbor{G}{b}]\right);\]
    \item a multiset~$V_1$ of $\abs{\bs}+1$ votes with the preference $q'\Succ q\Succ \overrightarrow{\rs}\Succ p$;
    \item a multiset~$V_2$ of~$2\kappa$ votes with the preference $q\Succ p \Succ \overrightarrow{\rs} \Succ q'$;
    \item a multiset~$V_3$ of $\kappa$ votes with the preference $q\Succ p\Succ q'\Succ \overrightarrow{\rs}$;
    \item a multiset~$V_4$ of $\kappa$ votes with the preference $\overrightarrow{\rs}\Succ p\Succ q'\Succ q$.
\end{itemize}
For $\bs'\subseteq \bs$, let~$V(\bs')=\{\succ_b \setmid b\in \bs'\}$. Let~$V$ be the multiset of all the constructed $2\abs{\bs} + 4\kappa + 1$ votes, and define $V(\overline{\bs}) = V \setminus V(\bs)$. The resulting instance is $g(I) = ((C, V), p, \rhd)$, which is computable in polynomial time. 
We prove that~$I$ is a {\yesins} of {\prob{RBDS}} if and only if~$g(I)$ is a {\yesins} of {\prob{Possible Winner-TSMR}}.

$(\Rightarrow)$ Suppose that~$I$ is a {\yesins}. That is, there is a subset $\bs'\subseteq \bs$ such that $\abs{\bs'}=\kappa$, and~$\bs'$ dominates~$\rs$ in~$G$. 
We complete each partial vote $\succ_b\, \in V(\bs)$, where $b \in \bs$, as follows:
\begin{itemize}
    \item if $b\in \bs'$, complete it as $\left(\overrightarrow{\rs}\setminus \neighbor{G}{b}\right) \Succ q' \Succ q \Succ p \Succ \left(\overrightarrow{\rs}[\neighbor{G}{b}]\right)$
    \item otherwise, complete it as $q \Succ p \Succ \left(\overrightarrow{\rs}\setminus \neighbor{G}{b}\right) \Succ q' \Succ \left(\overrightarrow{\rs}[\neighbor{G}{b}]\right)$.
\end{itemize}
Under this completion,~$p$ beats all its predecessors in~$\rhd$, while~$q$ is beaten by its predecessor~$q'$. Hence,~$p$ is the TSMR winner with respect to~$\rhd$, and thus~$g(I)$ is a {\yesins}.

$(\Leftarrow)$ Assume that $g(I)$ is a {\yesins}. That is, there is a completion~$V'$ of~$V(\bs)$ such that~$p$ is the TSMR winner of the election $\elec=(C, V(\overline{\bs})\cup V')$, with respect to~$\rhd$. Regardless of the completions,~$q$ beats all its predecessors except~$q'$. Since~$p$ wins~$\elec$, and~$q$ appears last in~$\rhd$, it must be that~$q'$ beats~$q$ in~$\elec$. Hence, at least~$\kappa$ partial votes in~$V(\bs)$ are completed with~$q'$ ranked before~$q$. Each partial vote $\succ_b\, \in V(\bs)$ admits only one such completion, namely 
\begin{equation}
\label{eq-a}
    \left(\overrightarrow{\rs}\setminus \neighbor{G}{b}\right) \Succ q' \Succ q \Succ p \Succ \left(\overrightarrow{\rs}[\neighbor{G}{b}]\right).
\end{equation}
Let $\bs' \subseteq \bs$ be the set of vertices whose corresponding partial votes are completed this way, Write $\abs{\bs'} = \kappa + t$ for some $t\geq 0$. Since~$p$ wins~$\elec$ and~$\abs{V}$ is odd,~$p$ must beat every candidate in~$\rs$. For each $r\in \rs$, exactly~$3\kappa$ votes in $V(\overline{\bs})$ (namely those in~$V_2 \cup V_3$) rank~$p$ before~$r$. To ensure that~$p$ beats~$r$ in~$\elec$, at least $\abs{\bs}-\kappa+1$ partial vote completions must rank $p$ before~$r$. It follows that at least $t + 1$ completions of partial votes in~$V(\bs')$ rank~$p$ before~$r$. By~\eqref{eq-a},~$p$ is ranked before~$r$ in a completion of some $\succ_b$ where $b\in \bs'$ if and only if~$r$ is a neighbor of~$b$ in~$G$. Hence, every~$r\in \rs$ has at least $t+1$ neighbors from~$\bs'$ in~$G$. 
By removing any $t$ vertices from $\bs'$, we obtain a subset of size $\kappa$ that dominates $\rs$.  Thus~$I$ is a {\yesins} of {\prob{RBDS}}.
\end{proof}

\end{document}